\setlist[enumerate,1]{label=\arabic*.,ref=\arabic*}
\setlist[enumerate,2]{label=\alph*.,ref=\arabic{enumi}.\alph*}
\setlist[enumerate,3]{label=\roman*.,ref=\arabic{enumi}.\alph{enumii}.\roman*}
\numberwithin{equation}{section}
\numberwithin{figure}{section}
\newtheorem{thm}{Theorem}[section]
\newtheorem{cor}[thm]{Corollary}
\newtheorem{prop}[thm]{Proposition}
\theoremstyle{definition}
\newtheorem{remark}[thm]{Remark}
\newcommand{\Real}{\mathbb R}
\newcommand{\diag}{\mathrm{diag}}
\newcommand*{\ud}{\mathrm{d}}
\newcommand{\pspace}{(\Omega,\mathcal{F},\mathbb{P})}
\newcommand{\ead}{\mathrm{EAD}}
\newcommand{\lgd}{\mathrm{LGD}}
\newcommand{\npv}{\mathrm{NPV}}
\newcommand{\cf}{\mathrm{CF}}
\newcommand{\pc}{\mathrm{PC}}
\newcommand{\pd}{\mathrm{PD}}
\newcommand{\nd}{\mathrm{ND}}
\newcommand{\elgd}{\mathrm{ELGD} }
\newcommand{\ltv}{\mathrm{LTV}}
\newcommand{\rv}{\mathrm{RV}} 
\newcommand{\lc}{\mathrm{LC}}
\newtheorem{theorem}{Theorem}[section]
\theoremstyle{definition}
\newtheorem{algorithm}[theorem]{Algorithm}
\begin{document}

\title{A dimension reduction approach for loss valuation in credit risk modelling}

\def\correspondingauthor{\footnote{Corresponding author: j.he2@uva.nl}}

\author[1]{Jian He \orcidlink{0000-0002-5372-0541} \correspondingauthor{}}
\affil[1]{Korteweg-de Vries Institute, University of Amsterdam, Amsterdam, The Netherlands}
\author[1]{Asma Khedher \orcidlink{0000-0003-1528-3923}}
\author[1, 2]{Peter Spreij \orcidlink{0000-0002-6416-6320}}
\affil[2]{IMAPP, Radboud University, Nijmegen, The Netherlands}

\date{\today} 

\maketitle

\begin{abstract}
\noindent 
This paper addresses the ``curse of dimensionality'' in the loss valuation of credit risk models. A dimension reduction methodology based on the Bayesian filter and smoother is proposed. This methodology is designed to achieve a fast and accurate loss valuation algorithm in credit risk modelling, but it can also be extended to valuation models of other risk types. The proposed methodology is generic, robust and can easily be implemented. Moreover, the accuracy of the proposed methodology in the estimation of expected loss and value-at-risk is illustrated by numerical experiments. The results suggest that, compared to the currently most used PCA approach, the proposed methodology provides more accurate estimation of expected loss and value-at-risk of a loss distribution.                          \smallskip\\
{\sl keywords: Bayesian filter, credit risk, loss valuation} \\
{\sl 2020 Mathematics Subject Classification:}  62P05, 91G40
\end{abstract}
\newpage
\tableofcontents
\newpage

\section{Introduction}

\subsection{Problem description and background}

In this paper we apply {\it the grids methodology} to construct an approximation of {\it the valuation function} which is used to evaluate potential credit losses of loans. To tackle the ``curse of dimensionality'' problem that the grids methodology heavily suffers from,
we propose a dimension reduction technique based on the Bayesian filter and smoother.

Computing risk metrics in credit risk for large portfolios is computationally intensive. This is especially true when the analytical loss distribution is not available. In general, Monte Carlo simulation is required to estimate the loss distribution. In such a case, one needs to simulate the multi-year default probabilities (PD) and realize loss-given default (LGD) of each single loan or each pool of loans, and then evaluate the potential losses. The simulation of the multi-year PD and realized LGD for large and complex portfolios requires extremely large computational resources. Therefore, in practice a valuation grid is usually used to approximate the portfolio valuation functions to avoid multi-year simulations. Valuation grids are a commonly applied computational tool in science and engineering (for example, \citet{thompson1998handbook}) and have also been applied in finance. For instance, \citet{broadie2004stochastic} uses grids to speed up the valuation of high dimensional American options. Recently, the grids method are also used together with the neural networks or Gaussian process regression to obtain better precision and faster calculations in valuations, see \citet{horvath2021deep}, \citet{pagnottoni2019neural}, \citet{Evans2018ScalableGP}. The idea of using grids to measure portfolio risk is used in the stratified sampling methodology by \citet{jamshidian1996scenario} to approximate directly the probability density function of a portfolio value. 
\\
The grids methodology that applies to approximate the valuation function needs only a small number of exact valuations, which are used to construct a grid. The constructed grid is a fast approximation of the valuation function that can then be used during a simulation. The key to applying this method is to be able to express the valuation function as a function of a small number of risk factors, usually not larger than three, since the grids methodology heavily suffers from the ``curse of dimensionality'', \citet{Bellman1957}. The ``curse of dimensionality'' describes the extraordinarily rapid growth of computation time as the number  of the dimension increases. When the dimensionality of the risk factors is large, a large number of valuations is required to construct the interpolation pricing function and hence the grid methodology becomes impractical. Nevertheless, in credit risk modelling, a higher factor model is preferred to better describe the transition probabilities. Therefore, a dimension reduction approach is needed to map the higher dimensional factors to lower dimensional factors that can be used in the valuation grids. Currently, the most used dimension reduction approach is  Principal Component Analysis (PCA), \citet{jolliffe1990principal}. It transforms the original risk factors into a new risk factor space composed of orthogonal principal directions. For example, five dimensional risk factors result in five uncorrelated principal components. Movements of the five dimensional factors can thus be described by the principal components. The principal components are ordered from the largest to the least in terms of explaining the total variance of the five dimensional risk factors. In practice, one needs to decide on the number of principal components to use. Assume one decides to use the first two principal components. Then the original five dimensional risk factors are "reduced" to two dimensional risk factors, which are a linear combination of the first two principal components. However, using the PCA approach in credit risk modelling has the following drawbacks.
\begin{itemize}
  \item \emph{Low interpretability}: principal components are linear combinations of the risk factors, but they are not easy to interpret. For example, usually the risk factors need to be standardized before PCA decomposition. Consequently, it is difficult to tell which are the most important risk factors after computing principal components. Therefore one loses the economic interpretation when analyzing the impact of the principal components.  
  
  \item \emph{Sensitivity to outliers}: PCA is a method based on the correlation or covariance matrix, which can be very sensitive to outliers. Consequently, outliers can completely mess up a classical PCA analysis and yield a PCA model with an arbitrarily bad fit to the genuine part of the data. Indeed, PCA cannot account for outliers in the risk factors that are outside the range of the selected components and such outliers usually need to be removed before performing PCA. However, in credit risk modeling, these outliers are actually desired since they might be located in the tail of the loss distribution and hence contribute to the Value-at-Risk (VaR).   
  \item \emph{Assumption on the correlations}: PCA assumes correlations between the risk factors. If the risk factors are not correlated, PCA will not provide any additional insight.
  \end{itemize}  
As an alternative to the PCA approach, in this paper we propose a dimension reduction methodology. 

\subsection{Contribution}

Our first and main contribution is a dimension reduction methodology based on the Bayesian filter and smoother. The Bayesian filter and smoother estimate the distribution (or value) of a latent process, given the observed data. We interpret the Bayesian filter and smoother as a projection from the observed data to unobserved latent factors. More precisely, this feature is applied to project the \emph{higher} dimensional risk factors onto \emph{lower} dimensional factors.

Compared to the commonly used Principal Components Analysis (PCA) approach, the proposed dimension reduction methodology has the follow advantages.
\begin{itemize}
\item \emph{Economic interpretability}: The proposed approach projects the higher dimensional factors to lower dimensional factors in such a way that the lower dimensional factors capture the most importance features in the model maintaing an economic interpretation. For instance, if one applies the proposed approach to a transition model in which a certain rating migration constitutes the majority of observations (i.e.\ the majority of clients), this rating migration is an importance risk driver and the lower factors will tend to capture the feature of this rating migration. 
\item \emph{Optimality}: The proposed approach is optimal in the sense that the Bayesian filter is optimal on the criterion of Bayes risk of minimum mean square error (MMSE), see \citet{chen2003bayesian}. Intuitively, the proposed approach is optimal in that it seeks the posterior distribution of the lower dimensional factors which integrates and uses all of available information.
\item \emph{It is generic and robust}: This approach is generic in that it applies to a wide range of risk or pricing models,  and insensitive to outliers. 
\end{itemize}

Our second contribution is an application of our proposed dimension reduction approach to credit loss estimation. We review the important concepts and models used in the credit risk modelling, like  transitions, loss given default (LGD), exposure at default (EAD) and loss valuation models. We also derive a closed form formula of Black-Scholes type for the expected LGD of collateral LGD models. We conduct numerical experiments to illustrate the performances of the proposed methodology on the credit loss estimation and we assess the accuracy of the estimation on PD curves, expected losses, and the VaR at different levels. We compare the performance of the valuation grid based on the proposed methodology and the PCA approach. The results indicate that the proposed approach convincingly outperforms the PCA approach.

\subsection{Organization of the paper}

Section~\ref{section:set-up} provides the background knowledge for loss calculations in credit risk modelling. We present the valuation model and the valuation grid approach for the loss calculation. This section also contains brief reviews of transition, LGD and EAD models and specifies the models used in this paper. The proposed dimension reduction methodology is presented in Section~\ref{section:projectionmethod}, in which the Bayesian filter and smoother are also presented. In Section~\ref{section:numerics} we provide numerical experiments to show the performance of the proposed approach. Finally, Section~\ref{section:conclusions} is devoted to the conclusions. In the Appendix we provide a Black-Scholes formula for the expected LGD.

Here follows some notations and other conventions used in this paper.  All random processes are defined on a fixed probability space $(\Omega, \mathbb{F}, \mathbb{P})$. Time is assumed to be discrete. We will use a filtration $\{\mathcal{F}_k, k=0,\ldots, n\}$, where $\mathcal{F}_k$ summarizes the information up to time $k$ and $n$ is the  final time.
We denote by $\cdot^\top$ the transpose of a vector or a matrix.

\section{Set up and background}\label{section:set-up}

In this section we outline the set up, we pose the problem, give a brief survey of models used for loss calculations in credit risk modeling, Bayesian filters (including  Kalman, and particle filter) and address the ``curse of dimensionality'' problem for loss calculations using Bayesian filters.

\subsection{Overview of loss calculations in credit risk}\label{subsection:valuationgrid}

One of the main tasks in credit risk is to forecast and quantify future losses of a portfolio at a certain time horizon. The time horizon is a future time period over which the credit risk is assessed. For instance, in the regulations prescribed by the Basel Committee on Banking Supervision~\citet{CR_BASEL} there is a one-year time horizon across all asset classes. The forecast risk can be quantified, for example, by expected losses or value at risk, depending on the purpose. In the International Financial Reporting Standards (IFRS) extensive reports of the expected losses for different obligors and sectors are required and for Economic Capital (EC) the extreme quantile of the loss distribution (usually the $99.95\%$ quantile) needs to be calculated. 

Three risk parameters are essential in the process of calculating the potential losses: the probability of default (PD), the loss given default (LGD) and the exposure at default (EAD). The PD is the probability that a borrower will fail to pay back a debt. The LGD is the fraction of a loan that is lost when a borrower defaults and the EAD is the predicted amount of the loss, the borrower may be exposed to, when a debtor defaults on a loan. They are generally estimated using the available historical information and are assigned to operations and customers according to their particular characteristics. In this context, the credit rating tools (ratings and scorings) assess the risk for each transaction or customer according to their credit quality by assigning them a score. This score is then used in assigning risk metrics, together with additional information such as transaction seasoning, loan to value ratio, customer segment, etc. The increase in the number of default events in the current economic situation contributes to reinforce the soundness of the risk parameters by adjusting their estimates and by refining methodologies. The incorporation of data from the years of economic slowdown is particularly important for refining the analysis of the cyclical behavior of credit risk. The effect on the PD estimates and on the credit conversion factor is immediate.\footnote{The credit conversion factor calculates the amount of a free credit line and other off-balance-sheet transactions (with the exception of derivatives) to an EAD amount and is an integral part in the European banking regulation since the Basel II accords, see \citet{BASELII}.} An analysis of the impact on the LGD, however, requires waiting for the outcome of the recovery processes associated with the default events. Below we present a brief review to the models of the three risk parameters, i.e.\ PD, LGD and EAD. 

\subsubsection*{Transition models}

A model for the PD, or for transition matrices that govern rating migrations, depends on the rating system. A credit rating system uses a limited number of rating grades to rank borrowers according to their default probabilities. Ratings are assigned by rating agencies such as Fitch, Moody's and Standard \& Poor's, but also by financial institutions themselves. Rating assignments can be based on a qualitative process or on default probabilities estimated with a scoring model. To translate default probability estimates into ratings, one defines a set of rating grades depending on the default probabilities. For example, borrowers are assigned grade AAA if their probability of default is lower than $0.02\%$, to grade AA if their probability of default is between $0.02\%$ and $0.05\%$ and so on. Once the ratings are defined, the concern is to determine the probability with which the credit risk rating of a borrower decreases or increases by a given degree and a transition (PD) model is required. These probabilities that a credit risk rating of a borrower decreases or increases, from one period to the next one, are usually collected in a transition matrix. Namely, the transition matrix measures the probabilities of the migrations between different credit ratings over specific time intervals. In the transition matrix, the element located at the $i$-th row and $j$-th column represents the probability of a borrower migrates from the $i$-th rating to the $j$-th rating. These are called transition probabilities and are often related to macroeconomic variables such as interest rates, inflation, gross domestic product (GDP), unemployment, etc. Alternatively, transition probabilities can also be modelled by using certain abstract latent processes.
\medskip\\
The transition (PD) models in general can be divided into two main classes: structural and reduced form models. 

{\it Structural models} link the up- or down-grade (or default) probabilities of a firm to the value of its assets and liabilities.  The origin of structural credit risk models is the Merton model, \citet{merton1974pricing}, and then further developed by, for instance, \citet{hull2001valuing}, \citet{avellaneda2001distance}, \citet{duffie2004credit}, \citet{albanese2003credit}, and \citet{jeanblanc1999models}. In these structural models, the latent credit process is called a credit index $(X_t)_{t\geq 0}$, reflecting unobservable credit quality over time which is driven by firm-specific variables such as the asset values. A default occurs if the credit index crosses a default barrier. The dynamics of the credit index are specified by the model and the default barriers can be calibrated to historical migration or market data. Structural models can be generalized to describe not only defaults, but also rating migrations. This requires a mapping from the  latent credit index to the rating states at each time, which corresponds to specifying an interval for the credit index that corresponds to each rating class. The intervals are defined by rating barriers, and rating changes occur when the credit index passes a barrier value. 

{\it Reduced form models}, see for example, \citet{jarrow1995pricing}, \citet{duffie2004credit}, \citet{frey2002dependence}, \citet{wendin2006dependent},  \citet{koopman2008multi}, \citet{jeanblanc2020characteristics}, and \citet{jeanblanc2008reduced}, give a description for the latent credit quality of an entity by assuming that it is driven by exogenous factors. They are based on a linear factor model for the latent variable, or equivalently, for the hazard rate of a firm. The key assumption of reduced form models is that the latent process or hazard rate process is driven by systematic external factors. For instance, in a reduced form model the latent credit process $(X_t)_{t\geq 0}$ can be decomposed into two independent parts: a systematic risk factor and a idiosyncratic risk factor. The systematic risk factor includes factors that are relevant for the business environment of the firms, such as the state of the economy, dummy variables for the industry sectors and regions, or other variables that are related to credit quality. The idiosyncratic risk factor is the firm-specific term. As in the structural model, in reduced form model a default occurs if the latent variable $(X_t)_{t \geq 0}$ falls below a certain threshold at a certain time. Same concept can be generalized to credit ratings by defining a mapping from the credit variable $(X_t)_{t\geq 0}$ to the rating states, by using threshold parameters for all ratings. These threshold parameters give the limits for the latent variable to move from current rating to other ratings, see for instance \citet{nickell2000stability}, \citet{frey2002dependence}, \citet{wei2003multi}, \citet{duffie2004credit}. Given the systematic factors, the default (transition) probabilities depends on the  cumulative distribution function of the idiosyncratic factors. This distribution function is usually referred to as the response function. The most common choices for the response function are the probit function (standard normal distribution function) and the logit function. Compared to the structural model, the reduced form models are not explicitly based on on the company's balance sheet. It only requires information generally available in financial markets and hence historical estimation methods can be easily used. But the consequence, and drawback, is that the models do not explain the economic reasons for default. 

In this paper we use a reduced form approach to model the transition probability. We will describe this in more detail in Section \ref{section:tm}.

\subsubsection*{LGD models}

Loss Given Default (LGD) is one of the key determinants of the credit risks of loans and other credit exposures, which measures the severity of the loss when a debtor defaults. It is the share of an asset that is lost when a borrower defaults. An client may not end up with a loss if it is in default since it could be cured. Hence it is always worth noting that the LGD estimate is sensitive to the definition of default,  which can substantially change the implied Recovery rate. Moreover, the LGD is client-specific because such losses are generally understood to be influenced by key transaction characteristics such as the presence or the value of a collateral. For example, in mortgages the houses themselves are usually the collateral. That means that if the borrower defaults, the bank could sell the house to compensate for the loss from missing payments from the borrower. Obviously, the LGD could be stochastic over time. Despite its stochastic nature, statistical modeling of the LGD has been challenging in the academic literature and in banking practice. Therefore many credit risk models used in practice assume that LGD is a deterministic proportion of the exposures subject to impairment and ignore the fact that LGD can fluctuate according to the economic cycle. For example, \citet{altman2011default} and \citet{altman2005link} show that default rates and recovery rates are strongly negatively correlated and measure a correlation of 0.75 between yearly average default rates and loss rates in the United States. They provide strong correlation evidence between macroeconomic growth variables (such as gross domestic product) and recovery rates and test the impact of correlated defaults and LGD.
There are various models proposed to describe the LGD. \citet{frye2000collateral} and \citet{frye2000depressing} propose a structural model with a systematic risk factor representing the state of the economy that drives both defaults and LGD. The dependence of the default indicator and LGD on the common risk factor gives rise to a strong correlation between the two, which is in line with empirical evidence. \citet{hillebrand2006modelling} introduces dependent LGD modeling into a multifactor latent variable framework, providing a good fit to corporate bond data. Marginal distributions for indicator functions and LGD can be specified. \citet{frontczak2015modeling} show that for the retail sector it is important to include collateral and suggested improvements for modelling LGD for this sector. We adopt this latter approach to model the LGD in this paper. The details are presented in Section \ref{sec:LGD-model}.

\subsubsection*{EAD models}

Exposure at Default (EAD) is the predicted amount of loss that may be exposed to when a debtor defaults on a loan. In many cases, such as residential mortgages and personal loans, the EAD are deterministic and can be simply taken from the current on-balance amount. For credit cards though, the revolving nature of the credit line poses challenges with regards to predicting the exposure at default time. As credit card customers may borrow more money (off balance) in the months prior to default, simply taking the current balance for non-defaulted customers would not produce a conservative enough estimate for the amount drawn by the time of default. Examples to model the EAD with an off balance amount can be found in \citet{EAD_CC2013} and \citet{EAD_TONG2016}. 
In this paper, we treat the EAD as deterministic and equal to the on-balance amount. 
Finally, given the risk parameters we next explain how the portfolio losses are determined.

\subsubsection*{Portfolio losses}

Given the PD, LGD and the EAD models, the realized losses of the obligors can be determined by simulating the risk factors and the migrations. If the simulated rating of an obligor becomes default at a certain time $t$, the loss of this obligor can be simply computed as $\lgd_t\times \ead_t$, the notation should be obvious. After discounting and aggregating all the losses at different times, one can obtain the aggregated loss of the obligor. Then the portfolio loss for each realization is computed by summing the aggregated losses from all obligors. 

By simulating a lot of scenarios, one obtains the distribution of the portfolio losses and hence determines the desired risk metrics, such as expected loss or value-at-risk. Figure~\ref{fig:illustration} is presented as a simple illustration to determine the loss distribution of the portfolio. A number of scenarios are simulated until maturity. In this case the maturity is 30 periods and the horizon is one period. In each scenario, the migrations of all obligors are simulated over time and the losses at each time point are computed for the obligors that move to default at that time point. Aggregating all the discounted losses from all time points for all obligors, one obtains the portfolio losses for this specific scenario. In the end the loss distribution can be obtained by repeating the calculations for all simulated scenarios.  
\begin{figure}[hbtp]
 \centering
 \includegraphics[scale=0.5]{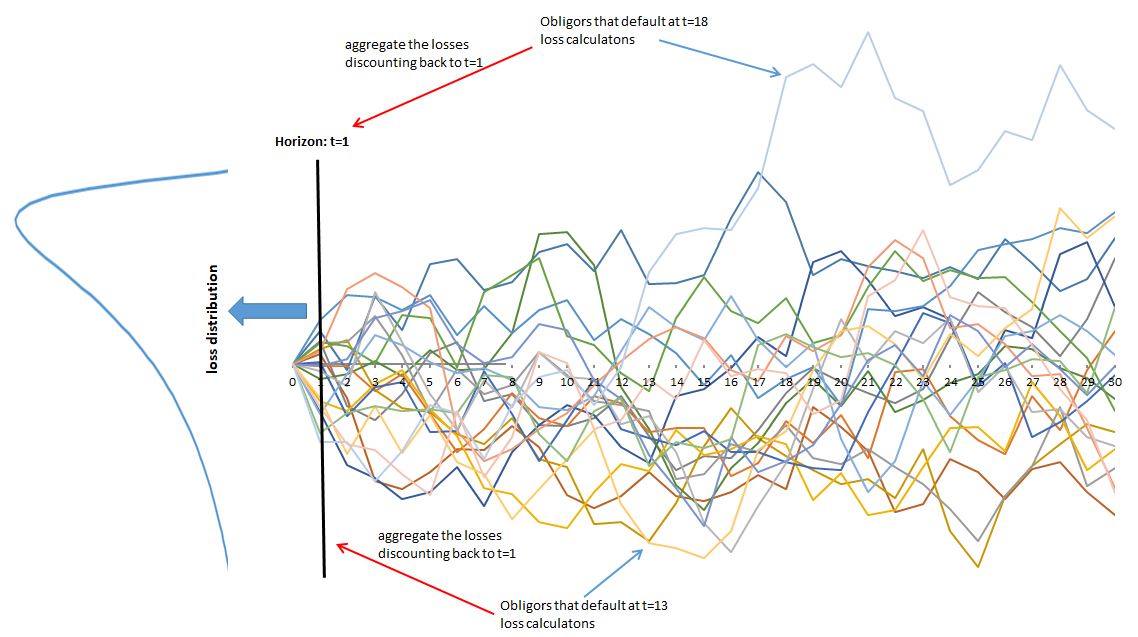}
 \caption{An illustration to determine the loss distribution}
 \label{fig:illustration}      
\end{figure}
This procedure is very time consuming when dealing with a huge portfolio, since it requires simulations of the migrations for all the obligors along each scenario. Hence, in practice, a valuation model together with a valuation grid method is needed to simplify the loss calculations. In the next subsections we introduce the valuation model we use in this paper and we introduce our valuation grid approach to compute the risk metrics.

\subsection{Valuation model for the credit losses}\label{subsection:valuationmodel}

In general, a portfolio's credit loss is defined as the difference between the
portfolio's current value and its future value at the end of some time horizon see \citet{CR_BASEL}. Therefore a valuation model can be used to compute the value of the loans and hence to capture the risks. For example, such a model is used to capture the risk of a one-year loan and a ten-year loan. If, over the one-year horizon, the one-year loan is downgraded but does not default, then it survives and there is no loss. If the ten-year loan is downgraded but does not default, then the \emph{economic} value of this loan should be lower than if this loan was not downgraded. The loan values are based on a net present value (NPV). Specifically the NPV is computed as the expected discounted value of all future net cash flows:
\begin{equation}\label{eq:npv}
\begin{aligned}
\npv = \mathbb{E}\big( \sum_{k=1}^n \delta_k \cf_k\big),
\end{aligned}
\end{equation}
where $n$ is the maturity of the loan, the $\delta_k\in\mathbb{R}^+$ are the risk-free discounting factors and the $\cf_k$ are the future cash flows at time $k$. It is very important to note, while it is correct to use risk-free rates to discount risk-free cash flows, we have to discount risky cash flows due to  potential defaults. This issue can be resolved in two ways: find risk-adjusted discount rates or transform the probabilities to risk adjusted (risk neutral) probabilities. The former way requires finding an appropriate discount rate for each (type of) obligor or loan, which would introduce a lot of parameters in the model. Therefore the latter approach to transform the physical default probabilities into risk-neutral default probabilities is usually used. More details about the risk neutral default probability can be found in \citet{Qmeasure_DG1998}. To reduce the complexity of the model in practice, the risk-free discounting factors are usually derived from deterministic curves, for instance those obtained from the Libor rates or Overnight Indexed Swaps (OIS). In this paper, we do not distinguish between the physical measure and the risk neutral measure for simplicity, and hence the expectation in equation~\eqref{eq:npv} is interpreted as a risk neutral expectation. 
\medskip\\
Consider a loan with coupon payments $s_k$ at time $k=1, 2, \ldots, n$ and a principal repayment $\pc$ at maturity $n$. Since the default depends on the rating of the loan, we denote the default time of a loan with rating $r$ by $\tau_r$, a non-negative real random variable. If the loan defaults in period $k$ (between time ${k-1}$ and $k$), denoted by the indicator $\mathbbm{1}_{\tau_r\in(k-1, k]}$, the cash flow equals the coupon until time $k-1$ plus the recovery $1-\lgd_k$. If the loan does not default, the cash flow equals all the coupons plus the principal. Hence, one obtains the value of the loan with rating $r$ at initial time $k=0$, denoted by $V_0^{(r)}$ as 
\begin{equation*}
\begin{aligned}
V_0^{(r)} &= \mathbb{E}\left[\sum_{k=1}^n \mathbbm{1}_{\tau_r\in(k-1, k]}\big( \sum_{j=1}^{k-1} \delta_{j}s_j + \delta_{k}(1-\lgd_k)\ead_k\big) + \mathbbm{1}_{\tau_r>n}\big( \sum_{k=1}^n \delta_{k}s_k + \delta_{n}\pc\big)\right]\\
&=\underbrace{\mathbb{E}[\sum_{k=1}^n \delta_{k}\mathbbm{1}_{\tau_r\in(k-1, k]}}_{\mathrm{lifetime}\, \pd}(1-\lgd_k)\ead_k]  + \underbrace{\mathbb{E}\left[\sum_{k=1}^n \delta_{k}\mathbbm{1}_{\tau_r>k}s_k \right]}_{\mathrm{coupon\, payments}} + \underbrace{\mathbb{E}\left[\delta_{n}\pc\mathbbm{1}_{\tau_r>n}\right]}_{\mathrm{principle}}.
\end{aligned}
\end{equation*}
Hence, standing at horizon $h$ and suppose the sigma-algebra $\mathcal{F}_h$ contains the information up to time $h$,\footnote{In practice, the specific definition of $\mathcal{F}_h$ depends on the models used for PD, LGD, and EAD. For instance, if factor models are used, the sigma-algebra $\mathcal{F}_h$ is usually defined as the sigma-algebra that is generated by the factors upto time $h$.} the resulting value of a performing loan with rating $r$, denoted by $V_h^{(r)}$, is
\begin{equation}
\begin{aligned}
V_h^{(r)} &= \rv_h + \sum_{k=h+1}^n \delta_{k}\mathbb{E}\left[\mathbbm{1}_{\tau_r\in(k-1, k]}(1-\lgd_k)\ead_k \mid \mathcal{F}_h \right] \\
& + \sum_{k=h+1}^n \delta_{k}s_k\mathbb{E}\left[\mathbbm{1}_{\tau_r>k} \mid \mathcal{F}_h \right] + \delta_{n}\pc\,\mathbb{E}\left[\mathbbm{1}_{\tau_r>n}  \mid \mathcal{F}_h  \right],
\end{aligned}
\label{eq:Vr}
\end{equation}
where $\rv_h$ is the summation of the cash flow until horizon, i.e.\
\begin{equation*}
\rv_h = \sum_{i=1}^h\delta_{i}s_i.
\end{equation*}
Note that $V_h^{(r)}$ is $\mathcal{F}_h$ measurable.

If there is no default after horizon $h$, the value of the loan is just the summation of all the discounted cash flows,
\begin{equation*}
V_{h,\nd}^{(r)} = \rv_h + \sum_{k=h+1}^n \delta_{k}s_k + \delta_{n}\pc.
\end{equation*}
Therefore, one obtains the valuation model for the loss of a performing loan with rating $r$ at horizon $h$, conditioned on information at $h$, as follows.
\begin{equation}
\begin{aligned}
L_h^{(r)} 
& = V_{h,\nd}^{(r)}-V_h^{(r)}\\
& = \sum_{k=h+1}^n \delta_{k}s_k\mathbb{E}\left[\mathbbm{1}_{\tau_r\leq k} \mid \mathcal{F}_h \right] + \delta_{n}\pc\,\mathbb{E}\left[\mathbbm{1}_{\tau_r\leq n} \mid \mathcal{F}_h \right] \\ 
& \qquad - \sum_{k=h+1}^n \delta_{k} \ead_k\, \mathbb{E}\left[\mathbbm{1}_{\tau_r\in(k-1, k]}(1-\lgd_k) \mid \mathcal{F}_h  \right].
\end{aligned}
\label{eq:L}
\end{equation}
Moreover, note that in the last equality we consider the $\ead_t$ as the on-balance amount at time $t$, as in most cases,  and hence it is deterministic. One observes that the loss at time $k$ is from the future cash flows after the defaults, but compensated by the recovery $(1-\lgd_k)\ead_k$. It is worth noticing that the definition of loss in this valuation model is in the \emph{economic} sense. That means once a loan defaults, the losses not only come from the outstanding exposures at default, but also from missing interest payments from the future coupons. Note that the loss in Equation~\eqref{eq:L} is path-dependent. The probability of default depends on the value of the risk factors in the transition model and the $\lgd_k$ may also depend on risk factors in the LGD model.

\subsection{Valuation grid for losses valuation}

There is usually no closed-form expression for the conditional distribution of the cumulative PD or the conditional joint distribution of PD and LGD. Accordingly it is not possible to derive analytical formulas for the conditional expectations $\mathbb{E}[\mathbbm{1}_{\tau_r\leq k} \mid \mathcal{F}_h ]$ and $ \mathbb{E}[\mathbbm{1}_{\tau_r\in(k-1, k]}\lgd_k\mid \mathcal{F}_h  ]$ in Equation~\eqref{eq:L}. As alternative, a valuation grid is usually used to approximate the aforementioned conditional expectations. The valuation grid methodology, see for instance, \citet{chishti1999simulation} and \citet{gibson2000improving}, requires only a small number of valuations, which are used to construct a fast approximation of the valuation function (in our case the conditional expectations in \eqref{eq:L}) when doing the simulation. The basic idea behind the valuation grid approach is simple. Consider a conditional expectation in which the value depends on a small number of risk factors (say two or three). Firstly, the conditional expectation values are evaluated, by Monte Carlo for instance, on a small, pre-selected set of points which forms a grid in the risk factor space. Then, during the simulation to construct the loss distribution, the conditional expectation values under different scenarios are quickly calculated by using the previously computed values on the grid points through interpolation or more sophisticated tools such as Gaussian process regression, see  \citet{rasmussen2006gaussian} or neural networks, see \citet{bishop1994neural} and \citet{angelini2008neural}. 

We apply the valuation grid to every conditional expectation in \eqref{eq:L}, that is to  $\mathbb{E}[\mathbbm{1}_{\tau_r\leq k} \mid \mathcal{F}_h ]$ and to $ \mathbb{E}[\mathbbm{1}_{\tau_r\in(k-1, k]}\lgd_k\mid \mathcal{F}_h ]$ for $k=h+1,\dots, n$.
These conditional expectations are also called \emph{bullets} since they can be seen as the loss of a unit bullet loan\footnote{A unit bullet loan is a loan with unit notional where a payment of the entire principal of the loan, possibly the principal and interest, is due at the end of the loan term.}. In the rest of our paper, we adopt this terminology and refer the conditional expectations as bullets. 
The key to applying the valuation grid method for valuation is to be able to express the bullet as a function of a small number of risk factors. However, these risk factor inputs are often high dimensional, since high dimensional factors are usually needed to describe the transition probabilities. The high dimensionality will result in big challenges in the implementation of the valuation grid, since the number of grid points to maintain a `minimum density' grows too fast. This can be understood intuitively with a simple example. Suppose five risk factors are used to determine the value of a loan.  If one wants to create a valuation grid with ten points in each of the five dimensions, then $10^5$ exact valuations would be required to create the grid. To tackle this problem, we propose a dimension reduction approach to project a higher factor model onto a lower factor model for the valuation. The advantage of the proposed projection approach is that the higher factor model is kept to model the transition probabilities, whereas a corresponding lower factor model is used for the valuation. Further details of the proposed projection approach will be presented in Section~\ref{section:projectionmethod}.

\subsection{valuation grid pre-processing and loss simulation}

Suppose factor models are assigned to model the transitions and LGDs. Denote the risk factors at time $k$ by $x_k$. The filtration $\{\mathcal{F}_s, s=0, 1,..., n\}$ is the information set in which $\mathcal{F}_s$ is the sigma-algebra generated by $\{x_0,...,x_s\}$. For every time $k$, every initial rating and possibly every initial loan-to-value (depending on the LGD model), the values of the bullets $\mathbb{E}\left[\mathbbm{1}_{\tau_r\leq k} \mid \mathcal{F}_h \right]$ and $ \mathbb{E}\left[\mathbbm{1}_{\tau_r\in(k-1, k]}\lgd_k\mid \mathcal{F}_h  \right]$ in Equation~\eqref{eq:L} are pre-computed on certain pre-selected points of risk factors. Then given time $k$, initial rating $r$ and possibly initial loan-to-value, the set of bullet values and the risk factor points form a valuation grid. The pre-computation of the bullets are usually done by Monte Carlo simulation. 

Suppose the default indicator $\mathbbm{1}_{\tau_r\in(k-1, k]}$ and $\lgd_k$ are conditionally independent given  $\mathcal{F}_k$. Then one derives 
\begin{equation*}
\begin{aligned}
\mathbb{E}\left[\mathbbm{1}_{\tau_r\in(k-1, k]}\lgd_k\mid \mathcal{F}_0 \right] &= \mathbb{E}\left[\mathbb{E}\left[\mathbbm{1}_{\tau_r\in(k-1, k]}\lgd_k \mid \mathcal{F}_k \right] \mid \mathcal{F}_0 \right] \\
& = \mathbb{E}\left[\mathbb{E}\left[\mathbbm{1}_{\tau_r\in(k-1, k]}\mid \mathcal{F}_k \right]\mathbb{E}\left[\lgd_k | \mathcal{F}_k \right]\mid \mathcal{F}_0 \right] \\
& = \mathbb{E}\left[\mathbb{P}(\tau_r\in(k-1, k] \mid \mathcal{F}_k)\mathbb{E}\left[\lgd_k | \mathcal{F}_k \right] \mid \mathcal{F}_0 \right] \\
& = \mathbb{E}\left[ \left(\pd_r(k\mid \mathcal{F}_k)-\pd(k-1\mid \mathcal{F}_k)\right)\mathbb{E}\left[\lgd_k | \mathcal{F}_k \right] \mid \mathcal{F}_0 \right],
\end{aligned}
\end{equation*}
with $\pd_r(k\mid \mathcal{F}_k )$ the cumulative PD of loan with initial rating $r$, at time $k$ conditioned on the information $\mathcal{F}_k$. This result suggests that, in the case that $\mathbbm{1}_{\tau_r\in(k-1, k]}$ and $\lgd_k$ are conditionally independent given $\mathcal{F}_k$, one only needs to simulate transition (default) probabilities and expected LGD when using Monte Carlo simulations to estimate $\mathbb{E}\left[\mathbbm{1}_{\tau_r\in(k-1, k]}\lgd_k\mid \mathcal{F}_0 \right]$. The additional simulations of the indicator $\mathbbm{1}_{\tau_r\in(k-1, k]}$ or the realized $\lgd_t$ are not necessary. Therefore, the valuation function \eqref{eq:L} at horizon $h$ can be reformulated as 
\begin{equation}
\begin{aligned}
L_h^{(r)} & = \sum_{k=h+1}^n \delta_{k}s_k \mathbb{E}\left[\pd( k\mid \mathcal{F}_k ) \mid \mathcal{F}_h \right] + \delta_{n}\pc\mathbb{E}\left[\pd(n \mid \mathcal{F}_n ) \mid \mathcal{F}_h \right] \\ 
& \qquad- \sum_{k=h+1}^n \delta_{k}\ead_k \mathbb{E}\left[\left(\pd(k\mid \mathcal{F}_k)-\pd(k-1\mid \mathcal{F}_k)\right)(1-\lgd_k) \mid \mathcal{F}_h  \right].
\end{aligned}
\label{eq:L2}
\end{equation}
When constructing the loss distribution, firstly the paths of the risk factors need to be simulated until the horizon. Then based on the simulated risk factors, the rating transitions and the LGDs can be simulated (computed) according to the transition and LGD models. If the loan defaults or expires before the horizon, the losses are already obtained by the simulation. If the loan is still performing (not defaulted or expired) at the horizon, the valuation grids are then used to valuate the bullets and consequently the loss after the horizon according to the valuation model. Aggregating the losses of all the loans, one obtains the distribution of the aggregated loss.   
In Figure~\ref{fig:illustration_grid} a simple illustration is presented that shows how the valuation grid is used to approximate the loss distribution. Scenarios and migrations are simulated only until the time horizon. Then given each simulation, the realised loss is computed for each loan. Aggregating the losses from all the loans, the loss distribution is obtained by collecting the portfolio losses from all the scenarios. 
\begin{figure}[hbtp]
 \centering
 \includegraphics[scale=0.5]{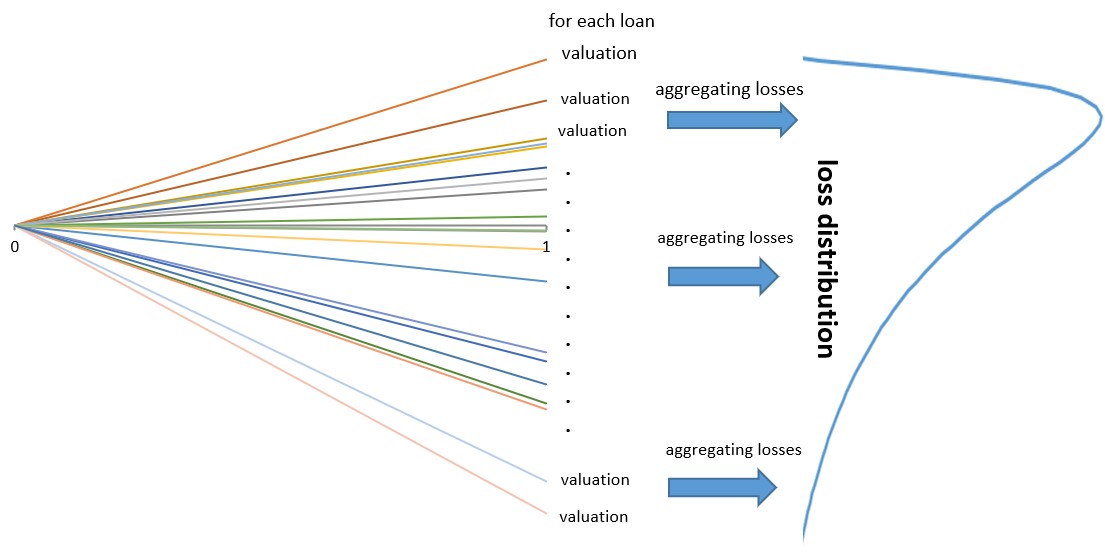}
 \caption{An illustration to determine the loss distribution using valuation grids}
 \label{fig:illustration_grid}      
\end{figure} 

As stated, the losses happened before or on the horizon can be directly computed using the simulated transitions and LGDs and only the losses that happened after the horizon will be evaluated by the valuation grid. Therefore, in this paper we ignore the pre-horizon losses and only focus on the losses after the horizon.


\section{A dimension reduction approach: Bayesian filter projection}\label{section:projectionmethod}

This section on our projection method is the core of the paper and the approach is motivated as follows. 
The desired advantage of the proposed logit transition model, see \eqref{eq:logitT} and \eqref{eq:model-m-theta}, is that it allows a high dimensional multi-factor model which describes the transition data accurately. However, the challenge of implementing a high dimensional transition model in credit risk modeling is the computational complexity in the expected loss valuation using valuation grid. Ideally, the high dimensional transition model is used to simulate the risk factors and determine the transitions, while lower dimensional factors are used for the valuation grid. Note that the simulated higher dimensional factors that are used to determine the transitions and the lower dimensional factors used in the valuation grid need to be consistent. Therefore a (dimension reduction) projection approach is required to map the higher dimensional factors to lower dimensional factors. The most used dimension reduction method approach to tackle this problem is Principal Component Analysis (PCA). But this approach is not optimal to capture the transition probabilities in a lower dimensional space. Hence the performance of this approach is not assured, see the experiments in Sections~\ref{section:experiment_tm}, \ref{section:experiment_EPD} and \ref{section:experiment_el}. In this section, we propose a dimension reduction method, which link the higher dimensional factors with the lower dimensional factor through the Bayesian filter (smoother). Since the Bayesian filter (smoother) achieves minimum mean square error (MMSE) \citet{chen2003bayesian}, the proposed dimension reduction method is optimal for the criterion of the Bayes risk, i.e.\ MMSE, to project the simulated higher dimensional factors to lower dimensional factors. The Bayesian filter is a general probabilistic approach for estimating the (distribution) of the latent states in a state space model. The state space model and the Bayesian filter will be described in Sections~\ref{subsection:state_space_model} and~\ref{subsection:Byesianfilter} respectively. The proposed dimension reduction algorithm is described in Section~\ref{subsection:BFprj}.

\subsection{State space model}\label{subsection:state_space_model}
State-space models deal with dynamic time series problems that involve unobserved variables or parameters that describe the evolution in the state of the underlying system. The general {\it state space} model, defined on some probability space $\pspace$, is as follows
\begin{equation}
\begin{aligned}
x_k &=f_k(x_{k-1},u_k),\,x_0 \\
y_k  &=h_k(x_k,v_k)\,,  \quad k \in \mathbb{N}^+\,,
\end{aligned}
\label{state_space_model}
\end{equation}
where
$f_k:\Real^d \times \Real^p \rightarrow \Real^d$, $h_k: \Real^d \times \Real^q \rightarrow \Real^m$ are given Borel measurable functions, $\{u_k\}_{k\in \mathbb{N}^+}$ are $p$-dimensional
and $\{v_k\}_{k\in \mathbb{N}^+}$ are $q$-dimensional white noise processes both independent of the initial condition $x_0$, and mutually independent as well. Parameters in the functions $f_k$ and $h_k$, together with the covariance of $u_k$ and $v_k$ can be seen as the parameters of the state space model, to which we collectively refer to as $\psi$.

We are interested in estimating the (latent) state process $\{x_k\}_{k \in \mathbb{N}^+}$, but only have access to the process $\{y_k\}_{k \in \mathbb{N}^+}$ which represents the observations. Because of the existence of the white noise in the data, estimating the values, or even the distributions, of the latent states $\{x_k\}_{k \in \mathbb{N}^+}$ by the observations $\{y_k\}_{k \in \mathbb{N}^+}$ is not trivial. There are different methodologies in the literature to estimate the latent process, see e.g.~\citet{Press, ChuiChen, arulampalam2002tutorial}, and the Bayesian filter (smoother) is one of the mostly used approach in practice. Especially, the Bayesian filter (smoother) can be considered as a projection from the observation space to the state space. The Bayesian filter and smoother will be explained in Section~\ref{subsection:Byesianfilter}.

The state space model is widely used in many different areas such as economics and finance, statistics, computer science, and electrical engineering and neuroscience. Particularly in finance, it can be used to identify the latent variables including business cycles, expectations of certain economic variables, permanent income streams, reservation wages, etc. The state space model is also often applied to factor models to describe, for instance, interest rate and credit risk driving factors.

\subsection{Bayesian filter and smoother}\label{subsection:Byesianfilter}

The {\it Bayesian filter} and {\it Bayesian smoother}, see e.g.~\citet{Sarkka2013,Press,Robert} for an overview, are used to estimate the distribution of the latent states $\{x_k\}_{k \in \mathbb{N}^+}$ in \eqref{state_space_model} given the parameters, which we denote  $\psi$. Often, but not always, we emphasize the dependence of  densities on the parameters in the notation.

We let the initial density function of $x_{0}$ be  $p_{0}$. The methodology in Bayesian filtering consists of two parts: prediction and update. At every time point $k$, the prediction part computes (estimates) the {\it prior distribution (density)} of $x_k$ given the past observations up to time $k-1$),  
\begin{equation*}
p(x_k\mid y_{1:k-1},\psi),
\end{equation*}
and the update part computes (estimates) the {\it posterior distribution (density)} of $x_k$ given the past up to time $k$,
\begin{equation*}
p(x_k\mid y_{1:k}, \psi).
\end{equation*}
The purpose of the Bayesian smoother\footnote{This definition actually applies to the fixed-interval type of smoothing. For fixed-leg smoothing, one can consult \citet{anderson2012optimal}.} is to compute the marginal posterior distribution of the state $x_k$ at the time step $k$ after receiving the measurements up to a time step $T$ with $T>k$,
\begin{equation*}
p(x_k\mid y_{1:T}, \psi).
\end{equation*}
The difference between filters and smoothers is that that the filter uses the information from the previous and current period to estimate the posterior distribution of current state $x_k$, while the smoothers uses also the observations after the current period and hence produces optimal estimations.

It follows that the state space model~\eqref{state_space_model} satisfies the properties of a stochastic system, i.e.\ at every (present) time $k\geq 1$ the future states  and future observations $(x_j,y_j)$, $j\geq k$, are conditionally independent from the past states and observations $(x_j,y_{j-1})$, $j\leq k$,  given the present state $x_k$, see~\citet{vanschuppen1989}.
It then follows that  $\{x_k\}_{k \in \mathbb{N}}$ is a Markov process,
and for every $k\geq 1$ one has that $y_k$ and $y_{1:k-1}$ are conditionally independent given $x_{k-1}$, in terms of densities,
\begin{equation}\label{eq:xyk}
p(x_k\mid x_{k-1},y_{1:k-1})=p(x_k\mid x_{k-1}).
\end{equation}
Similarly, due to Markov property, $x_k$ and $y_{k+1:T}$ are independent given $x_{k+1}$, which gives
\begin{equation}\label{eq:xyk+1}
p(x_k\mid x_{k+1},y_{1:T})=p(x_k\mid x_{k+1}, y_{1:k}).
\end{equation}
Moreover, one also has, for every $k\geq 1$,  that $y_k$ and $y_{1:k-1}$ are conditionally independent given $x_{k-1}$, in terms of densities,
 \begin{align}\label{eq:cond-ind-y}
 p(y_k\mid y_{1:k-1},x_k)=p(y_k\mid x_k)\,,\quad  \mbox{for}\quad  k \in \mathbb{N}^+\,.
\end{align}
These three equations, \eqref{eq:xyk}, \eqref{eq:xyk+1} and \eqref{eq:cond-ind-y} are used in the next sections in the derivation of Bayesian filter and smoother recursions.  

\subsubsection{Bayesian filter recursion}\label{subsection:BFR}

Using Bayes' rule and Equation \eqref{eq:xyk}, we deduce that the density function of the prior distribution is given by
\begin{align}
p(x_k\mid y_{1:k-1}, \psi) &= \int p(x_k\mid x_{k-1},y_{1:k-1}, \psi)p(x_{k-1}\mid y_{1:k-1}, \psi)\, \ud x_{k-1}\nonumber\\
                           &= \int p(x_k\mid x_{k-1}, \psi)p(x_{k-1}\mid y_{1:k-1}, \psi)\, \ud x_{k-1}\,.\label{eq:prior}
\end{align}
Note that when $k=1$, the posterior distribution $p(x_{k-1}\mid y_{1:k-1}, \psi)$ is defined as the initial density $p_{0}$. 

The purpose of the Bayesian algorithm is to sequentially compute the posterior distribution  $p_k(\ud x_k\mid \psi)=p(\ud x_k\mid y_{1:k}, \psi)$. Again using Bayes' rule, \eqref{eq:cond-ind-y} and \eqref{eq:prior}, we obtain the posterior 
\begin{align}
p(x_k\mid y_{1:k}, \psi) &= \frac{p(y_k\mid x_k,y_{1:k-1}, \psi)p(x_k\mid y_{1:k-1}, \psi)}{p(y_k\mid y_{1:k-1}, \psi)}\nonumber
\\
& = \frac{p(y_k\mid x_k, \psi)p(x_k\mid y_{1:k-1}, \psi)}{\int p(y_k\mid x_k, \psi)p(x_k\mid y_{1:k-1}, \psi)\, \ud x_k}. \label{eq:posterior}
\end{align}
If we assume the marginal likelihood function $p(y_k\mid x_k, \psi)$ and the transition probability $ p(x_k\mid x_{k-1}, \psi)$ are known, then given the posterior density $p(x_{k-1}\mid y_{1:k-1}, \psi)$ at time $k-1$, we can use Equations \eqref{eq:prior} and \eqref{eq:posterior} to compute the posterior density $p(x_{k}\mid y_{1:k}, \psi)$ at time $k$.
In this way the posterior distributions can be computed recursively given the initial distribution $p_0$.

\subsubsection{Bayesian smoother recursion}\label{subsection:BSR}

According to Bayes's rule and Equations~\eqref{eq:xyk} and \eqref{eq:xyk+1}, the density $p(x_k \mid y_{1:T})$ can be expressed as 
\begin{align}
p(x_k \mid y_{1:T}, \psi)) &= \int p(x_k\mid x_{k+1},y_{1:T}, \psi)p(x_{k+1}\mid y_{1:T}, \psi)\, \ud x_{k+1} \nonumber \\
&= \int p(x_k\mid x_{k+1},y_{1:k}, \psi)p(x_{k+1}\mid y_{1:T}, \psi)\, \ud x_{k+1} \nonumber \\
&= \int \frac{p(x_k, x_{k+1} \mid y_{1:k}, \psi)}{p(x_{k+1} \mid y_{1:k}, \psi)} p(x_{k+1}\mid y_{1:T}, \psi)\, \ud x_{k+1} \nonumber \\
&= \int \frac{p(x_{k+1} \mid x_k, y_{1:k}, \psi)p(x_{k} \mid y_{1:k}, \psi)}{p(x_{k+1} \mid y_{1:k}, \psi)} p(x_{k+1}\mid y_{1:T}, \psi)\, \ud x_{k+1} \nonumber \\
&= p(x_{k} \mid y_{1:k}, \psi) \int \frac{p(x_{k+1} \mid x_k, \psi)}{p(x_{k+1} \mid y_{1:k}, \psi)} p(x_{k+1}\mid y_{1:T}, \psi)\, \ud x_{k+1}. \label{eq:smoother} 
\end{align}
According to Equation~\eqref{eq:smoother}, given that the prior and posterior distribution are known from the Bayesian filter recursions, the Bayesian smoother estimation $p(x_k \mid y_{1:T}, \psi)$ can be computed using a backward recursion.  
\medskip\\
While the mathematical derivations of the Bayesian filter and smoother are straightforward, in practice it is always a big challenge to compute the integrals in Equations~\eqref{eq:prior},  \eqref{eq:posterior} or \eqref{eq:smoother}. Although there are some special cases (Gaussian linear cases) where theoretical formulas are available for the integrals, such as Kalman filter (see \citet{welch1995introduction} for instance),  in most cases one has to find approximations or  numerical algorithms to compute these integrals. For the extended Kalman filter,  see, for example, \citet{einicke1999robust} and \citet{wan2001dual} and for the unscented Kalman filter, see~\citet{wan2001unscented}, the prior and the posterior distributions are approximated by Gaussian distributions. \citet[Section~10.6--10.7]{durbin2012time} proposes a mode estimation approach to compute the maximum-a-posterior (MAP) estimate of the posterior distribution. A more general approach to estimate the prior and posterior distribution is the particle filter, see e.g.~\citet{arulampalam2002tutorial,cappe2007overview,doucet2009tutorial}, in which prior and posterior distributions are estimated by a Monte Carlo method. For the implementation of the particle filter one can also refer to \citet{CJP, CrisanMiguez, he2021kalman}.

\subsection{Bayesian filter/smoother projection}\label{subsection:BFprj}
As described in Section~\ref{subsection:Byesianfilter}, the Bayesian filter and smoother can be seen as a projection from the observations $y_{1:k}$ to the (distribution of) state variables, given the model parameters $\psi$. From this perspective, the Bayesian filter (smoother) can be used as an approach to project higher dimensional factors to lower dimensional factors. In this section we will illustrate the Bayesian filter (smoother) projection approach in the credit risk environment specified in Sections \ref{subsection:valuationmodel}. However, note that this approach can also easily be applied or extended to other models for different risk types, such as market risk or interest rate risk, see also the examples in Section~\ref{section:examples}.
\medskip\\
Suppose transition models, see Section~\ref{subsection:valuationgrid}, with higher and lower dimension are calibrated based on the same data, and denote the calibrated parameters of higher and lower dimensional model by $\psi_H$ and $\psi_L$, respectively. The higher dimensional factor model is used to simulate the transition probabilities until the horizon  according to the transition model, for instance, a simulation based on Equations~\eqref{eq:model-m-theta} and \eqref{eq:logitT} in Section~\ref{section:numerics}. Denote the simulated higher dimensional factors by $x_{k, H}$ and the corresponding transition probabilities by $T(x_{k, H})$, for $k=1,\ldots,h+1$. One would like to use a lower factor model for a loss calculation. It is then required to project the higher dimensional factors $x_{k, H}$ to the lower dimensional factors, denoted by $x_{k, L}$. Note that the simulated transition probabilities $T(x_{k, H})$, based on the higher dimensional factors $x_{k, H}$, can be used as the observations of the lower factor model. Therefore, using the Bayesian filter and smoother iterations, one can obtain a projection from $x_{k, H}$ to $x_{k, L}$,
as described in Algorithm \ref{algorithm:projection}.

\begin{algorithm}[{\bf projection}]\label{algorithm:projection}
Given are the calibrated parameters of the higher and lower factor model $\psi_H$ and $\psi_L$. Suppose  $x_{1:h+1, H}$ and  $T(x_{1:h+1, H})$ are the simulated higher dimensional factors and the corresponding transition probabilities from time $k=1$ to horizon $h+1$.
\begin{enumerate}
\item Specify the initial density $p(x_{0}\mid \psi_L)$ for the lower dimensional factors. 
\item Using the transition probabilities $T(x_{1:h+1, H})$ as the data,
	\begin{enumerate}
		\item \label{algm:BF}  for $k=1,\dots,h+1$, compute the prior and posterior distribution $x_{k, L}$ according to the Bayesian filter iterations, see  Section~\ref{subsection:BFR}, i.e.		
		\begin{equation*}
		\begin{aligned}
		p(x_k\mid T(x_{1:k-1, H}), \psi_L) &= \int p(x_k\mid x_{k-1}, \psi_L)p(x_{k-1}\mid T(x_{1:k-1, H}), \psi_L)\, \ud x_{k-1}\, \\
		p(x_k\mid T(x_{1:k, H}), \psi_L) & = \frac{p(T(x_{k, H})\mid x_k, \psi_L)p(x_k\mid T(x_{1:k-1, H}), \psi_L)}{\int p(T(x_{k, H})\mid x_k, \psi_L)p(x_k\mid T(x_{1:k-1, H}), \psi_L)\, \ud x_k};
		\end{aligned}
		\end{equation*}
		\item \label{algm:BS} given prior and posterior distributions computed in  step \ref{algm:BF}, for $k=h,\ldots,1$, compute the Bayesian smoother using the backward recursion, see  Section~\ref{subsection:BSR},
		\begin{equation*}
		\begin{aligned}
			p(x_k\mid T(x_{1:h+1, H}), \psi_L) &= p(x_{k} \mid T(x_{1:k, H}), \psi_L) \times \\
			& \int \frac{p(x_{k+1} \mid x_k, \psi_L)}{p(x_{k+1} \mid T(x_{1:k, H}), \psi_L)} p(x_{k+1}\mid T(x_{1:h+1, H}), \psi_L)\, \ud x_{k+1};
		\end{aligned}
		\end{equation*}
		\item determine the lower dimensional factor $x_{k, L}$ projected from $x_{k, H}$ as equal to
		\begin{equation*}
		\mathbb{E}\left[p(x_{k}\mid T(x_{1:h+1, H}), \psi_L)\right], \, k=1,\ldots,h.
		\end{equation*}
	\end{enumerate}
\end{enumerate}
\end{algorithm}
\noindent
We are in particular interested in the value of the lower dimensional factor at horizon $h$, i.e.\ $x_{h, L}$, since it is finally used as the input of the valuation grid to evaluate the losses after this horizon. Note that in order to obtain the optimal estimate of $x_{h, L}$, the simulation is generated until $h+1$ so that the Bayesian smoother, step~\ref{algm:BS} in Algorithm~\ref{algorithm:projection}, can be used for $x_{h, L}$. Figure~\ref{fig:projection} presents the diagram of the projection approach. 
\begin{figure}[hbtp]
 \centering
 \includegraphics[scale=0.5]{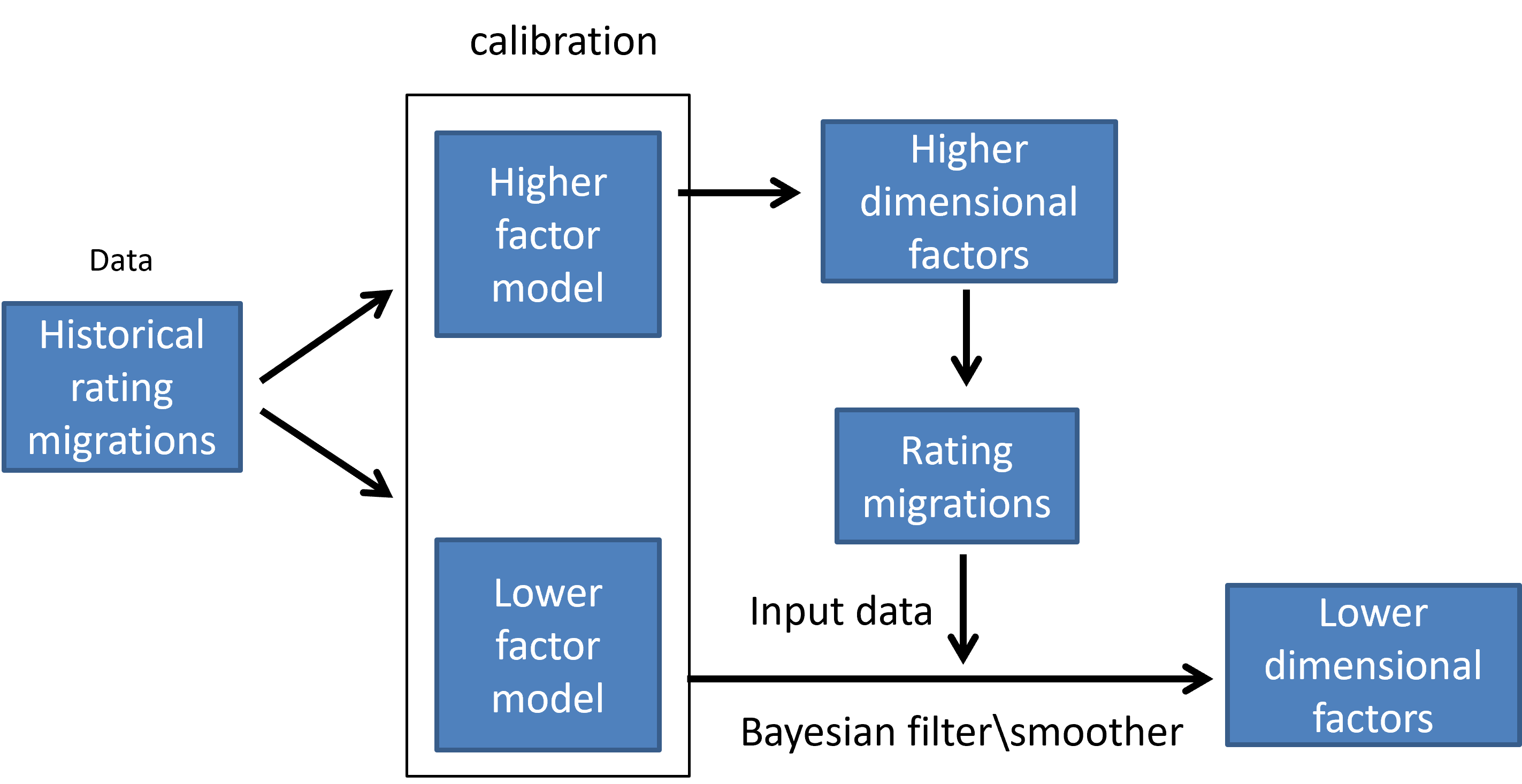}
 \caption{High level description of projection algorithm}
 \label{fig:projection}      
\end{figure}

\begin{remark}
An important input for the iteration of the projection is the initial distribution of $x_{k, L}$, i.e.\ $p(x_{0}\mid \psi_L)$. In credit risk, there are two methodologies to specify the initial distribution (or starting point) of the simulation, the point-in-time (PiT) or through-the-cycle (TtC). In PiT, all the simulations start at the same point and this point is defined as the `state of today', i.e.\ the value of the risk factor at the last period used in the calibration. In TtC, instead of defining a single starting point, an initial distribution is assigned to sample the starting point of each path. This initial distribution is usually defined as the equilibrium distribution of the risk factors process.\footnote{In practice, constraints are always imposed to make sure that the risk factor process is stationary and hence the equilibrium distribution exists.} Therefore, in both PiT and TtC cases, the initial distribution of $x_{k, L}$ can be easily defined.
\end{remark}
Integrating the projection algorithm into the calibration, simulation and valuation algorithms, we have our engine for the risk calculation, illustrated by Figure~\ref{fig:algorithm2}, and described in the following Algorithm~\ref{algorithm:projection2} , formulated in high level terms.
\begin{algorithm}[{\bf calibration, training, simulation, projection and valuation}]\label{algorithm:projection2}~ 
\begin{itemize} 
\item \textbf{Calibration}
	\begin{enumerate}
	\item 
	The transition matrices are modeled by a higher factor model, which is calibrated using historical data. One obtains $\psi_H$, the calibrated parameters of  the higher dimensional model. 
	\item 
	A lower factor model is also calibrated using the historical data and one obtains $\psi_L$, the calibrated parameters of the lower dimensional model.
	\end{enumerate}
\item \textbf{Training of the interpolation grid} 
\medskip\\
Define the higher dimensional grid, perform then for each rating the following steps.
	\begin{enumerate}
	\item 
	Starting at the points on the defined grid, use the higher factor model to simulate the scenarios of the risk factors and determine the expected losses.
	\item 
	Project the points (factors) of the higher dimensional grid to obtain lower dimensional points (factors) using the projection approach as in Algorithm~\ref{algorithm:projection}.
	\item 
	Use the lower dimensional points and the simulated expected losses to build the valuation grid, using interpolation or other techniques like Gaussian process regression, Neural networks, etc.
	\end{enumerate}
\item \textbf{Simulation}
	\begin{enumerate}
	\item 
	Using the high factor model, generate the factors and hence the transition matrices until the horizon.
	\item \label{sim:step2}
	Based on the generated transition matrices, simulate the rating migrations of the obligors until the horizon.
	\end{enumerate}
\item \textbf{Projection and valuation}
	\begin{enumerate}
	\item 
	Project according to the Projection Algorithm~\ref{algorithm:projection} the higher dimensional factors simulated in Step~\ref{sim:step2} of the Simulation part above to lower dimensional factors.
	\item 
	Use the valuation grid to evaluate the expected losses, given the ratings at the horizon and the lower dimensional factors.
	\item 
	The desired risk metrics, such as expected loss and value-at-risk, can finally be obtained by aggregating the expected losses.
	\end{enumerate}
\end{itemize} 
\end{algorithm}

\begin{figure}[hbtp]
 \centering
 \includegraphics[scale=0.5]{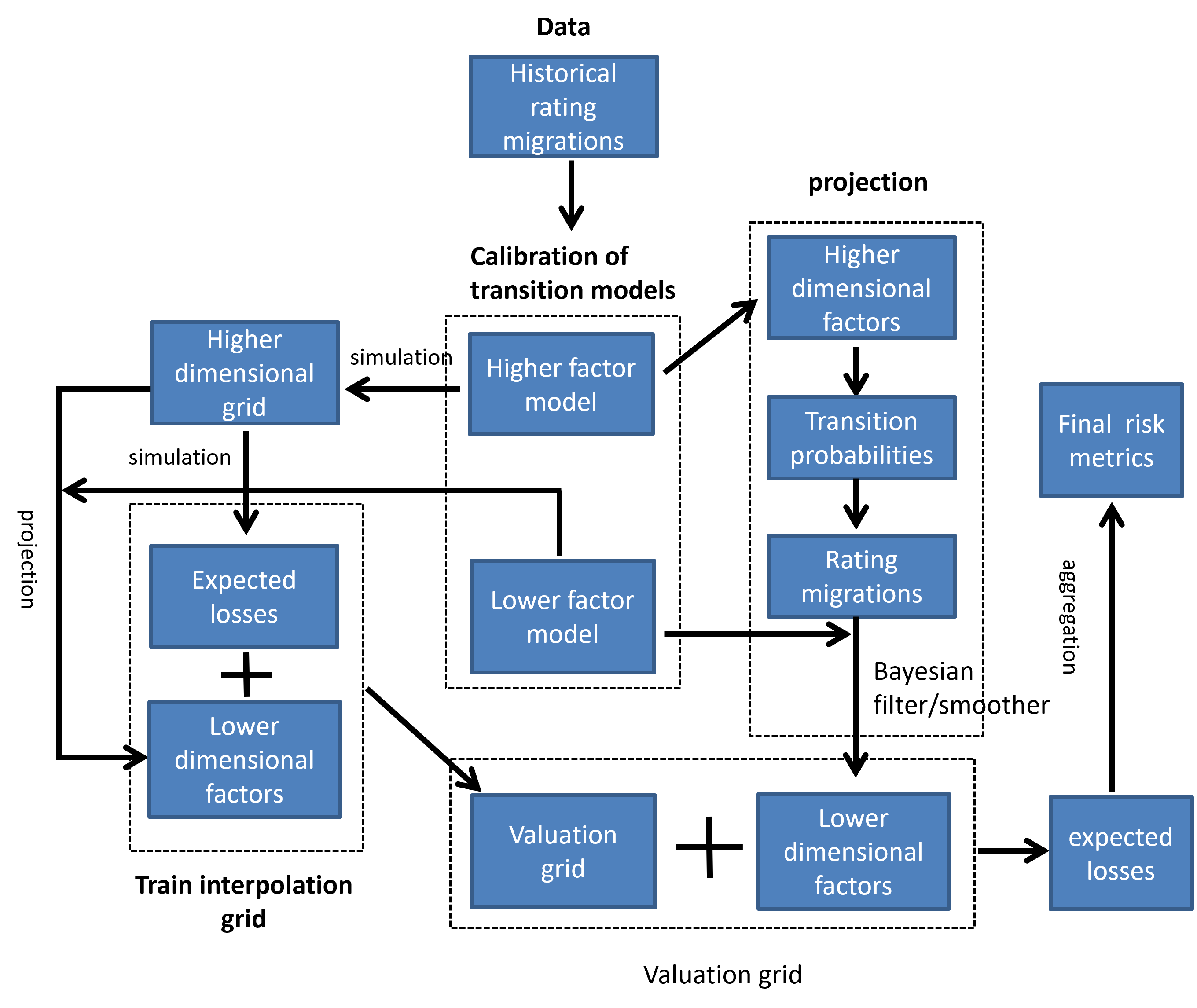}
 \caption{High level description of the algorithm}
 \label{fig:algorithm2}      
\end{figure}

\subsection{Application examples}\label{section:examples} 
The proposed Bayesian projection approach is a generic dimension reduction method which can be used for different types of credit risk valuations.
It is originally designed for the credit risk valuations, but it can also be applied to other grid based valuation models where the dimension reduction approach is needed. We list some possible directions for further  applications of the proposed Bayesian projection approach in the following subsections. 

\subsubsection{Credit Risk Economic Capital}
Economic Capital, mainly for financial service forms, is a metric to determine the amount of capital a firm must hold to protect itself against economic risks (such as market risk, credit risk, and operational risk), given its mixture of assets and liabilities and its strategy to risk appetite. The risks are measured by the unexpected losses through client defaults and the asset value changes through the movements in the market parameters. Typically the losses are estimated based on a time horizon and a confidence level. The time horizon is usually one year and the confidence level is usually $99.9\%$ or $99.95\%$, implying that the Economic Capital should cover the losses that would arise  in case an adverse event would occur with a probability of $0.1\%$ or $0.05\%$ over one year horizon. In particular, the Credit Risk Economic Capital (CREC) is to measure the internal capital required to cover the unexpected credit losses under ECB ICAAP Guidelines, see \citet{ICAAP}.

To measure the CREC, one particularly looks into two statistics: the expected loss (EL) and the $99.9\%$ or $99.95\%$ quantile of the portfolio loss. The CREC covers both default risk and migration risk. The default risk captures the losses from the default event within the time horizon, while the migration risk concerns the potential economic losses associated with non-default credit rating migrations. Although a non-default transition may not result in a direct loss under accrual accounting at the time horizon, it may result in an indirect loss in terms of the expected future cashflows, interest, principal and recoveries. According to this nature, the calculation of the CREC risk measures can be divided into two parts: the realised default losses within the time horizon and the potential losses after the time horizon. A Monte Carlo approach can be directly used to estimate the default losses within the time horizon, while a valuation grid is needed to compute the potential losses after the time horizon. Higher dimensional factors are desired for the simulation of the defaults and rating migrations up to the horizon, due to the higher precision. Meanwhile, a dimensional factors are used as the inputs of the grids to compute the potential losses after horizon. Therefore, the proposed Bayesian dimension reduction approach can be implemented to project higher dimensional factors to lower dimensional factors for the valuation of the potential losses after the horizon. 

\subsubsection{Incremental Risk Charge}
The Incremental Risk Charge (IRC), see \citet{IRCguide},  is a complement to the traditional stressed Value-at-Risk measured in Market Risk Economic Capital.  It is an estimate of default and migration risk of unsecuritized credit products in the trading book. The concept and calculation of IRC is in general similar to CREC. The IRC also measures the default and migration risks at a $99.9\%$ level of confidence over a one-year horizon. However, the CREC model mainly applies to the banking book, including exposures to government, institutions, corporates, retail exposures secured by real estate, and revolving retail exposures. It also covers the (one-year) counterparty credit risk of derivative instruments, securities financing transactions and hold-to-maturity assets in the trading book. By comparison, the IRC mainly concerns the credit risks of the bond and equity exposures in the trading book. Different from the booking book, in which the positions are held to maturity, the positions in the trading book are continuously bought and sold before their maturities. Therefore, unlike CREC, the IRC model differentiates the underlying traded instruments by liquidity horizon, with a minimum of three months\footnote{Note that the capital horizon is still one year.}. A constant level of risk assumption \footnote{The constant level of risk means the risk of the portfolio remains to be its initial risk level over the one-year horizon. In practice, This can be achieved by holding the constant portfolio over the liquidity horizon, rebalancing any default, downgraded, or upgraded positions at the beginning of each liquidity horizon, and rolling over any matured positions at the beginning of each horizon.} is also imposed in IRC and ensures that all positions in the IRC portfolio are evaluated over the full one-year time horizon. Since the calculation of IRC also requires migration/default simulations for every liquidity until a one year capital horizon, and the migration risks need to be evaluated according to certain valuation function, the proposed Bayesian dimension reduction approach can be applied to make the calculations more efficient.    

\subsubsection{International Financial Reporting Standard 9}
The International Financial Reporting Standard 9 (IFRS 9), see for example \citet{IFRS9instrument}, \citet{esrb2019expected}, \citet{gornjak2020literature}, is published by the International Accounting Standards Board (IASB). It addresses the accounting for financial instruments and it replaced the International Accounting Standard (IAS) 39 with regard to the methodology used to compute impairment provisions on financial instructions. IFRS 9 introduces a new impairment model based on the Expected Credit Loss (ECL), which is different from the Incurred Credit Loss (ICL) model under IAS 39. Under IFRS 9, institutions, as reporting entities,  will have to recognize not only incurred credit losses but also losses that are expected in the future. In IFRS 9, two different ECLs are required, depending on the stages of impairment, see \citet{IFRS9ECL}. In stage 1, the ECLs result from the default events that are possible within the next 12 months (12-month ECL). While in stage 2 or 3, lifetime ECLs are recognised. To obtain a 12-month ECL, a Monte Carlo simulation can be used based on the given PD and LGD models. While, for the lifetime ECL, a grid approach is required\footnote{Sometimes the 12-month ECL is used to construct a proxy for the lifetime ECL in the simplified approach, see \citet{IFRS9Monitoring}.}. Since the valuation of life time ECL usually involves a large number of factors from the PD and LGD models, the proposed Bayesian projection approach can be applied to reduce the dimensionality of the valuation grid for ECL and hence improve the efficiency of the computations.  

\subsubsection{Market risk}
Market risk can be defined as the risk of losses in on and off-balance sheet positions arising from adverse movements in market prices. From a regulatory perspective, market risk stems from all the positions included in the trading book as well as from commodity and foreign exchange risk positions in the whole balance sheet. In  market risk calculations, the VaR needs to be determined according to the future Profit-and-Loss (P\&L) distribution of the portfolios. The value of the portfolio depends on the future values of the asset prices, equity market volatilities, interest rates, etc. Therefore, to obtain the P\&L distribution, one has to simulate the future scenarios of (the risk factors of) the relevant asset prices, equity market volatilities or interest rates curves, and re-evaluate the portfolio values given the simulated scenarios. In order to efficiently re-evaluate the portfolio values, a valuation grid is usually used, see for instance \citet{gibson2000improving}, \citet{zamani2022pathwise}. The valuation grid links the risk factors and the derivative values. However, many derivatives involve multiple underlyings, for instance the multi-asset options (\citet{leentvaar2007multi}, \citet{ekedahl2007dimension}). Therefore, the proposed  Bayesian projection approach can be used in pricing such derivatives to avoid the ``curse of dimensionality''. 

\subsubsection{Interest rate risk in the banking book}
Interest rate risk in the banking book (IRRBB) refers to the current or prospective risk of the bank’s capital and earnings arising from adverse movements in interest rates that affect the bank’s banking book positions. When interest rates change, the present value and timing of future cash flows change. This in turn changes the underlying value of a bank’s assets, liabilities and off-balance sheet items and hence its economic value. Changes in interest rates also affect a bank’s earnings by altering interest rate-sensitive income and expenses, affecting its net interest income (NII). The calculation of the VaR for IRRBB is similar to the calculation of VaR for market risk. The P\&L is generated by simulating the (risk factors of) interest rates curves and re-evaluating the balance sheet given the simulated scenarios. The re-evaluation of the balance sheet can be done by using the valuation grid method. In the past years since the credit and liquidity crisis of 2007, a multi-curve framework of interest rates was developed to replace the single-curve framework. The single-curve framework bootstrapped a single risk-free curve which represented both the cost of funding future cash ﬂows and forward rates. This unique curve is used for both discounting and forecasting, while in the multi-curve framework, the discount curves and the forecasting curves are distinguished. As expected, the different forecasting tenor curve forecasts the future cashﬂows of different tenors and the discounting curve values these cashﬂows to today. More details regarding the single- and multi-curve framework can be found in, for instance, \citet{henrard2014interest}, \citet{pallavicini2010interest}. The multi-curve framework introduces more curves in pricing the derivatives. For example, three curves are needed to price the Constant Maturity Swaps or the Basis Swaps, i.e.\ one discounting curve and two forecasting curves. If each curve requires a two-factor model, in total six factors are needed to construct a valuation grid for these derivatives. To decrease the dimensionality of the valuation grid, one can apply the proposed  Bayesian projection approach.

\section{Numerical studies}\label{section:numerics}

In the sequel we illustrate the performance of the projection approach by examples. We aim to forecast the bullets in Equation~\eqref{eq:L2}, and hence forecast the loss distribution of a loan. The chosen loan in the experiment is a unit bullet loan, of which the notional is 1. Consequently the principal repayment $\pc=0$ and $\ead_k =1$ for $k=1,\ldots,n$. The coupon payments $s_k$ are set to be equal to $s = \frac{1}{n}, k=1,\ldots,n$. For convenience and without loss of generality,  discounting is not considered in this experiments. Hence Equation~\eqref{eq:L2}  simplifies to

\begin{equation}
\begin{aligned}
L_1^{(r)} & = \sum_{k=2}^n s \mathbb{E}\left[\pd( k\mid \mathcal{F}_k ) \mid \mathcal{F}_h \right] + \mathbb{E}\left[\pd(n \mid \mathcal{F}_n ) \mid \mathcal{F}_h \right] \\ 
& \qquad - \sum_{k=h+1}^n  \mathbb{E}\left[\left(\pd(k\mid \mathcal{F}_k)-\pd(k-1\mid \mathcal{F}_k)\right)(1-\lgd_k) \mid \mathcal{F}_h  \right].  
\end{aligned}
\label{eq:lsimple}
\end{equation}
In all experiments we assume the maturity of the unit loan after a horizon of 30 periods, i.e.\ $n=30$. 
\medskip\\
The value of the bullets depends on the transition and LGD models, therefore, we next introduce the transition and LGD models used in the experiments.

\subsection{Transition model}\label{section:tm}

The transition model we use belongs to the reduced form model. Let $k=1,\ldots,n$. Suppose we have $R>2$ credit states. We denote by $T_{ij,k}$ the probability at time $k$ of the transition from a credit state $i=1,\ldots, R$ to a credit state $j=1,\ldots,R$. We model the transition probabilities (or observed transition rates) using a logit function as follows,
\begin{equation}\label{eq:logitT}
T_{ij,k} =  \frac{g_{ij}\exp(\theta_{ij,k})}{\sum_{j=1}^R g_{ij}\exp(\theta_{ij,k})}\,,
\qquad i,j =1,\ldots R\,,\quad k=1,\ldots, n\,.
\end{equation}
The $\theta_{ij,k} + \log{g_{ij}}$ are usually referred to as the log-odds for the transition from a credit state \textit{i} to a credit state \textit{j}, in which the $g_{ij}$ are the parameters which indicate the level of the transition probabilities and $\theta_{ij,k}$ are the signals that describe the dynamics of the transition probabilities. Suppose, for clients in the $i$-th rating, that in $N_{i, k}$ independent trials the number of {\it observed} migrations of these clients  to the $j$-th credit state is $m_{ij,k}$ for $j = 1,\ldots, R$. It is easy to see that $m_{i, k}=[m_{i1, k}, \ldots, m_{iR, k}]$ has a multinomial distribution, with log-density given by
\begin{equation}\label{eq:likelihood-m}
\log p(m_{i, k} |T_k)= \sum_{j=1}^{R} m_{ij,k}\log T_{ij,k}+ \log C_{i,k}\,,
\end{equation}
where 
$$C_{i, k} = \frac{N_{i, k}!}{\prod_{j =1}^{R} m_{ij,k}!}.
$$
Plugging Equation~\eqref{eq:logitT} into Equation~\eqref{eq:likelihood-m}, 
one can rewrite the log-likelihood function \eqref{eq:likelihood-m} at time $k$ as
\begin{equation*}
\log p(m_{i, k}|\theta_{ij, k}) = \sum_{j=1}^R m_{ij, k}(\theta_{ij, k} + \log g_{ij})  - N_{i, k} \log\big(\sum_{j=1}^R g_{ij}\exp(\theta_{ij,k})\big) + \log(C_k)\,.
\end{equation*}
\noindent We assume that the migrations from a state $i$ to a state $j$ are driven by the {\it  latent} common factors $x$  (hence the migrations are also correlated). The latent process $x$ is assumed to evolve linearly over time with a Gaussian error $\eta$, and therefore our transition model is given by: 
\begin{equation}\label{eq:model-m-theta}
\begin{aligned}
m_{i, k} &\sim p(m_{i, k}|\theta_{ij, k})\,, 
\quad  \theta_{ij, k}  = K x_k, \quad k=1,\ldots,n\,, \\
x_{k} & = A x_{k-1} + \eta_{k}\,, 
\quad \eta_{k} \sim \mathcal{N}(0,Q),  
\end{aligned}
\end{equation}
where $x_k$ is a $d \times 1$ random vector,  $K$ is a $(R-1) \times d$-matrix, and $A$ and $Q$ are $d \times d$-matrices. The $g = \{g_{ij}, i, j=1, \ldots, R\}$, $K$, $A$ and $Q$ are referred to as the parameters of the transition model. The initial distribution of $x_k$ is assumed to be Gaussian with mean $a_0$ and (co)variance $P_0$, i.e.\ $x_0 \sim \mathcal{N}(a_0,P_0)$.\\ 
\\
This transition model belongs to the class of \emph{state space models} of Section~\ref{subsection:state_space_model}. The latent states $x_t$ and the model parameters are usually estimated by the Bayesian filter introduced in Section~\ref{subsection:Byesianfilter}. Specially, we estimates the latent states $x_t$ by using the method of \emph{mode estimation} (\citet[Section 10.6]{durbin2012time}) based on the Kalman filter.
\medskip\\
We suppose to have four different ratings with three performing ratings (P1, P2 and P3) and one default rating (D). We further assume that the default state is absorbing, which means that the probabilities of the transition from D to P1, P2 or P3 are zero. The set-up of the transition model in this example is the same as in Equations~\eqref{eq:model-m-theta} and \eqref{eq:logitT}. In this experiment we have three models: one benchmark model and two testing models. The benchmark model is chosen to be a four factor model, i.e.\ the dimensionality of $x_k$ is four. This model is treated as the `true' model and used to generate the loss  distribution which serves as a benchmark for comparison. The two testing models are chosen to be a two factor model and an one factor model. These two models are used to assess the performance of the proposed dimension reduction approach. All models are calibrated using the same migration data with a horizon of 120 periods. These migration data are simulated by using the benchmark model with parameters as follows. The parameter values are obtained from further classified real data, which we are not allowed to disclose.
\begin{align*}
A &= \diag([0.6, 0.95, 0.9, 0.5]),\\
Q &= \diag([0.6, 0.1, 0.1, 0.7]),\\
K &= 
\begin{pmatrix}
k_{11,1} & k_{11,2} & k_{11,3} & k_{11,4}\\
k_{12,1} & k_{12,2} & k_{12,3} & k_{12,4} \\
k_{13,1} & k_{13,2} & k_{13,3} & k_{13,4} \\
k_{14,1} & k_{14,2} & k_{14,3} & k_{14,4} \\
k_{21,1} & k_{21,2} & k_{21,3} & k_{21,4} \\
k_{22,1} & k_{22,2} & k_{22,3} & k_{22,4}\\
k_{23,1} & k_{23,2} & k_{23,3} & k_{23,4}\\
k_{24,1} & k_{24,2} & k_{24,3} & k_{24,4}\\
k_{31,1} & k_{31,2} & k_{31,3} & k_{31,4}\\
k_{32,1} & k_{32,2} & k_{32,3} & k_{32,4}\\
k_{33,1} & k_{33,2} & k_{33,3} & k_{33,4}\\
k_{34,1} & k_{34,2} & k_{34,3} & k_{34,4}\\
k_{41,1} & k_{41,2} & k_{41,3} & k_{41,4}\\
k_{42,1} & k_{42,2} & k_{42,3} & k_{42,4}\\
k_{43,1} & k_{43,2} & k_{43,3} & k_{43,4}\\
k_{44,1} & k_{44,2} & k_{44,3} & k_{44,4}\\
\end{pmatrix}
= 
\begin{pmatrix}
0 	  &  0    & 0 	  & 0\\
0.2   & 0.04  & 0.06  & 0.1\\
0.12  & -0.36 & 0.12  & -0.04\\
0.38  & -0.28 & -0.26 & -0.08 \\
-0.17 & 0.34  & 0.18  & -0.01 \\
0 	  &  0    & 0 	  & 0 \\
0.02  & -0.27 & 0.08  & -0.01 \\
-0.07 & -0.14 & 0.16  & -0.05\\
-0.22 & -0.11 & 0.01  & 0.12 \\
-0.03 & -0.2  & 0.01  & 0.22 \\
0 	  &  0    & 0 	  & 0 \\
-0.08 & 0.09  & -0.05 & -0.03 \\
0 	  &  0    & 0 	  & 0\\
0 	  &  0    & 0 	  & 0\\
0 	  &  0    & 0 	  & 0\\
0 	  &  0    & 0 	  & 0 \\
\end{pmatrix},
\\
G &= 
\begin{pmatrix}
g_{11} & g_{12} & g_{13} & g_{14}\\
g_{21} & g_{22} & g_{23} & g_{24}\\
g_{31} & g_{32} & g_{33} & g_{34}\\
g_{41} & g_{42} & g_{43} & g_{44}\\
\end{pmatrix}
= 
\begin{pmatrix}
0.95 & 0.03 & 0.0198 & 0.0002\\
0.05 & 0.9  & 0.04  & 0.01\\
0.05 & 0.12 & 0.78  & 0.05\\
0    & 0    &  0    & 1\\
\end{pmatrix}.
\end{align*}

\subsection{LGD model}\label{sec:LGD-model}

We adopt ideas from \citet{frontczak2015modeling} and use the collateral to model the LGD as
\begin{equation}\label{eq:lgdmodel}
\lgd_k = (1-\frac{c_k}{\ead_k})^+,
\end{equation}
where $c_k$ is the value of the collateral at time $k$. At time $k$, the log-return of the collateral value is denoted $\lc_k$, $\lc_k=\frac{c_k}{c_{k-1}}$, $k\geq 1$. The log-return of the collateral is assumed to be modeled by an AR(1) process, see also Equation~\eqref{eq:ARR}, 
\begin{equation*}
\lc_k = 0.73\,\lc_{k-1}+z_k, \, z_k\sim N(0,0.04^2).
\end{equation*}
The value of the collateral, recall \eqref{eq:lgdmodel}, determines the LGD as follows.
\begin{equation*}
\begin{aligned}
\lgd_k &= (1-\frac{c_k}{\ead_k})^+ \\
 &= (1-\frac{c_k}{c_0}*\frac{c_0}{\ead_0}*\frac{\ead_0}{\ead_k})^+\\
 &= (1-\frac{c_k}{c_0}*\frac{1}{\ltv_0}*\frac{1}{n-k+1})^+.
\end{aligned}
\end{equation*}
Note that, in order to determine the $\lgd_k$, the loan-to-value at the horizon $\ltv_0$ should be defined. 
Moreover, for convenience, we further assume that the risk factors for the transition matrices and the log-returns of the collateral factors are independent, hence the default probabilities and the LGD are independent. Note that this assumption is not valid in real life practice. Actually, in practice the correlation between the PD and LGD is a very important parameter in risk calculations. Under the independence assumption the loss in Equation~\eqref{eq:lsimple} is simplified to
\begin{equation}
\begin{aligned}\label{eq:EL_experiment}
L_1^{(r)} & = \sum_{k=2}^n s \mathbb{E}\left[\pd( k\mid \mathcal{F}_k ) \mid \mathcal{F}_h \right] + \mathbb{E}\left[\pd(n \mid \mathcal{F}_n ) \mid \mathcal{F}_h \right] \\ 
& \qquad - \sum_{k=2}^n  \mathbb{E}\left[\left(\pd(k\mid \mathcal{F}_k)-\pd(k-1\mid \mathcal{F}_k)\right) \mid \mathcal{F}_h  \right](1-\mathbb{E}\left[\lgd_k\mid \mathcal{F}_h  \right]).
\end{aligned}
\end{equation}
Given the assumed independence, we only need to approximate 
the multi-year expected probability of default $\mathbb{E}[\pd_{k}^{(r)}]$ by Monte Carlo simulation, while
the expected loss given default  $\mathbb{E}[\lgd_k]$ is analytically computed by using the Black-Scholes formula \eqref{eq:BSF_ELGD} of Corollary~\ref{cor:BSF_ELGD}. 
\medskip\\
In order to assess the performance of the proposed Baysian projection approach, a \emph{waterfall} terminology is used. In the waterfall, the analyses are done step by step as follows.
\begin{enumerate}
    \item The Bayesian and PCA projection approach are compared when used to describe the transition probabilities. The two approaches project the four dimensional factors onto two dimensional factors respectively, and then the transition probabilities based on the two dimensional factors are compared with the original transition probabilities from the four dimensional factors. In particular, for the Bayesian projection approach, Algorithm~\ref{algorithm:projection} is applied. 
    \item  The convergence analysis of the Monte Carlo simulation for expected PD is conducted to find out how many Monte Carlo scenarios are needed to obtain a reliable estimate of the multi-year expected PD.
    \item Then the Bayesian and PCA projection approaches are used in the valuation grid to estimate the multi-year expected PD and accuracies are compared. In this case,  Algorithm~\ref{algorithm:projection2} is applied. 
    \item The closed-form Black-Scholes formula for the expected LGD is checked by comparing with the Monte Carlo estimation. The expected LGDs are used to construct the loss distribution according to Equation~\eqref{eq:lsimple}.
    \item In the end, the proposed Bayesian approach, see Algorithm~\ref{algorithm:projection2}, is used to construct the loss distribution of a unit loan and the risk metrics (expected losses and value-at-risk) are estimated. The accuracy of the estimation is compared with the estimation from the PCA approach.
\end{enumerate}
\noindent
Table \ref{table:waterfall} summaries the waterfalls in the experiments Section~\ref{section:experiment}.  
\begin{table}[h!]
\centering
\begin{tabular}{|l|l|l|}
\hline
Section & Description & Results \\
\hline
Section~\ref{section:experiment_tm} &  Comparison of Bayesian and PCA projection approach & Figure~\ref{fig:error_prj_pca} \\
 & on transition probabilities &\\
\hline
Section~\ref{section:experiment_mc} & Convergence of Monte Carlo simulation for expected PD & Figure~\ref{fig:error_nsim2}\\
\hline
Section~\ref{section:experiment_EPD} & Comparison of Bayesian and PCA projection & Figures~\ref{fig:EPD_prj_pca_1f}, \ref{fig:EPD_prj_pca_2f}\\ 
 & approach on multi-year PD estimation & \\
\hline
Section~\ref{section:experiment_elgd} & Black-Scholes formula for expected LGD & Figure~\ref{fig:ELGD_mc1M_BS} \\
\hline
Section~\ref{section:experiment_el} & Comparison of Bayesian and PCA projection approach & Figures~\ref{fig:EL_prj_pca}-\ref{fig:var999_prj_pca_1f} \\
 & on loss distribution of the target unite bullet loan & \\
\hline
\end{tabular}
\caption{Waterfalls}
\label{table:waterfall}
\end{table}

\subsection{Experiments}\label{section:experiment}

In this section we perform a number of numerical studies to assess and illustrate our projection approach.

\subsubsection{Transition probabilities}\label{section:experiment_tm} 

Suppose the four dimensional model and the two dimensional model are calibrated on the data. In the experiment of this section, we assess the potential error introduced by the dimension reduction approaches, when projecting a higher dimensional transition model, in this case a four factor model, to a lower dimensional transition model, in this case, a two factor model. In particular, we apply the proposed projection approach, see Algorithm~\ref{algorithm:projection} and the PCA approach on the transition probability models and compare the performance of the two approaches. To achieve this, in total 100 different scenarios (risk factors and hence the transition probability matrices) are simulated using the four factor model. Then the proposed projection approach and the PCA approach are used to project the four dimensional factors onto two dimensional factors. For the proposed approach, the two factor model above is used and for the PCA approach the first two principal components are used to match with the number of factors. Based on the two dimensional factors, the transition probabilities can be recomputed according to the two factor model. These transition probabilities from the two factor model are compared with the original transition probabilities based on the four factor model and the relative difference between them are calculated. Figure~\ref{fig:error_prj_pca} shows the relative difference (error) between the transition probabilities of the benchmark (four factor model) and the projections from the proposed approach and PCA approach. One can observe that the differences for the Bayesian projection approach are significantly smaller than those for the PCA approach. 
\begin{figure}[ht]
 \centering
 \includegraphics[scale=0.36]{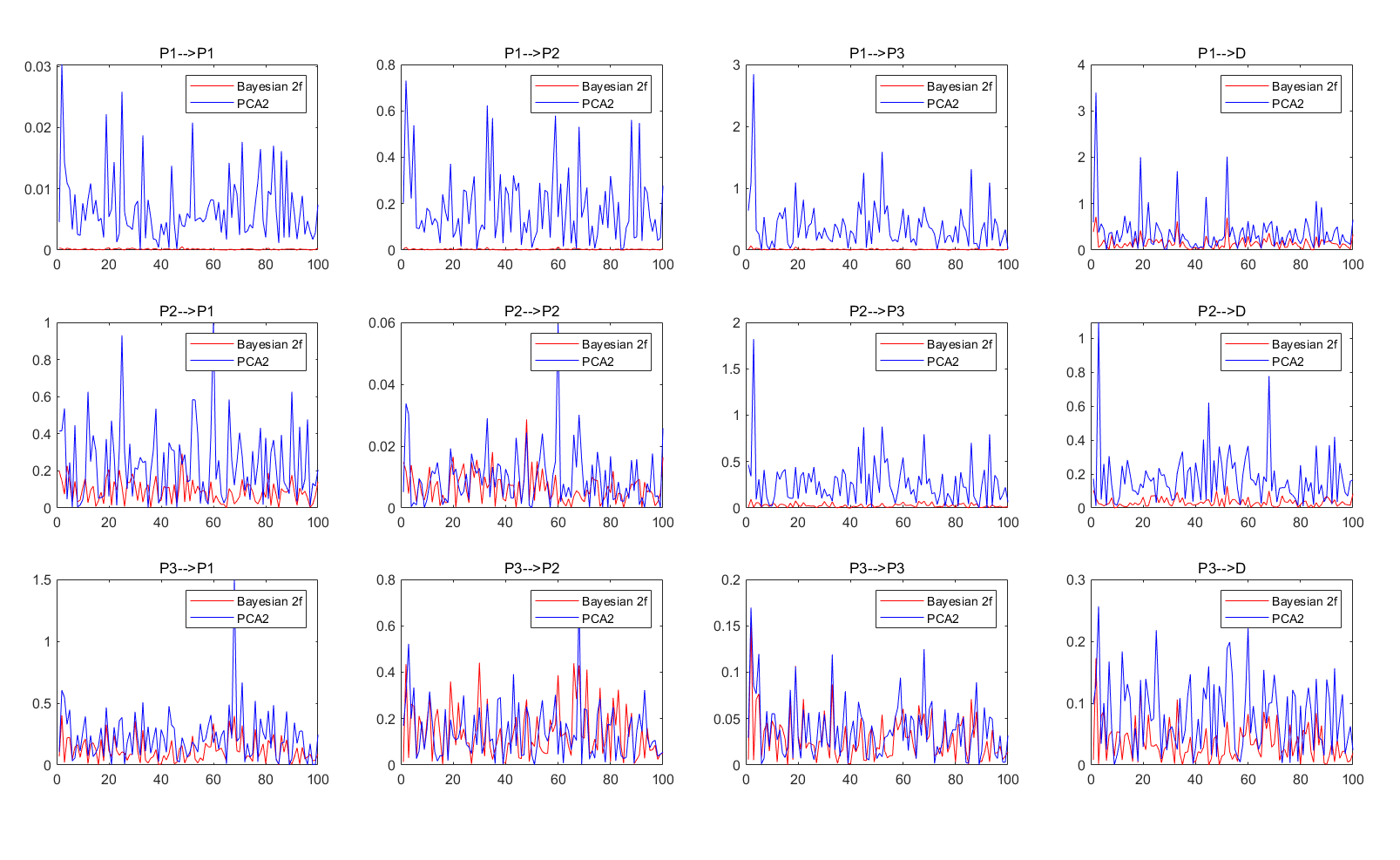}
 \caption{Relative difference of the transition probabilities: the red line is for the proposed Bayesian projection approach and the blue line is for the PCA. On the $x$-axis are the 100 simulated scenarios.}
 \label{fig:error_prj_pca}      
\end{figure}

\subsubsection{Expected PD simulation}\label{section:experiment_mc} 

In order to create the training samples for the expected losses, the expected cumulative PD up to the maturity need to be approximated using Monte Carlo, see Equation~\eqref{eq:EL_experiment}. In the experiment of this section we investigate how many Monte Carlo scenarios are required to obtain a adequate approximation for the expected cumulative PD. For each rating $P1, P2$ and $P3$, we randomly create 1000 starting points for the simulation. These starting points can be treated as the points on the grids. Then starting at each point, the risk factor  paths are simulated up to the maturity, in total $30$ periods. The number of risk factor  paths for each starting points varies from 500 to 20000. The expected cumulative PDs for each starting point are approximated by averaging over the risk factor paths. For the expected cumulative PD approximation 20000 paths are used as the benchmark, while the approximation using 500, 1000, 5000 and 10000 paths are used as the test sets. The relative difference of the expected cumulative PD between the benchmark and the test sets are computed and the results are presented in Figure~\ref{fig:error_nsim2}. The results indicate that the expected cumulative PD can be adequately approximated by using 1000 paths. 

\begin{figure}[ht]
 \centering
 \includegraphics[width=0.95\textwidth]{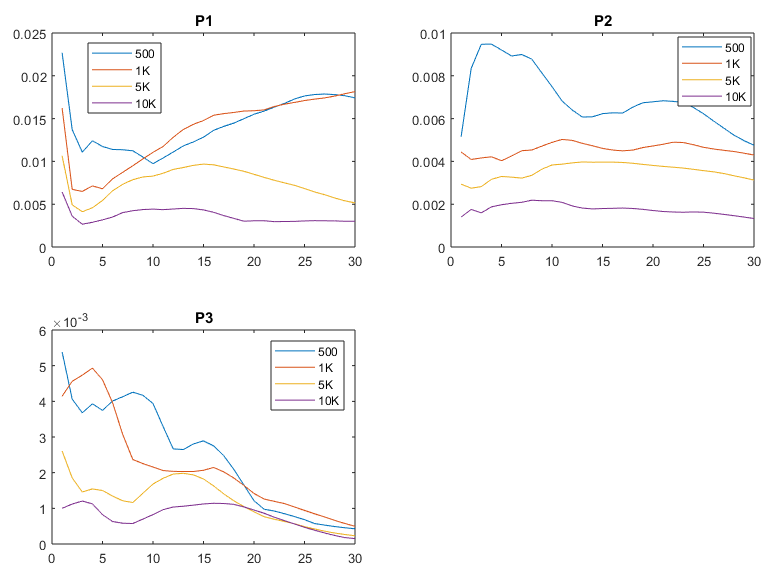}
 \caption{Maximum relative difference of the expected cumulative PD between 20000 MC paths and 500, 1000, 5000, 10000 paths: the $y$-axis is the maximum relative difference of the expected cumulative PD for the 1000 different starting points, on the $x$-axis are the time periods up to 30.}
 \label{fig:error_nsim2}      
\end{figure}

\subsubsection{Experiments for the projection approach for expected PD}
\label{section:experiment_EPD}
The purpose of the experiments is to assess the performance of the proposed Bayesian filter dimension reduction approach on valuating the losses, see Algorithm~\ref{algorithm:projection2}. Due to the independent assumption between the PD and LGD, the loss valuation is simplified such that  the interpolation grid is only required for the expected PD valuation, see Equation~\eqref{eq:lsimple}. While the expected LGD can be easily computed by using the Black- Scholes formula in Appendix~\ref{appendix:BSF_LGD}. Therefore, in this section, we apply Algorithm~\ref{algorithm:projection2} to compute the expected PD and the performance of the proposed Bayesian projection approach is assessed by comparing with the PCA approach. The assessment for the loss valuation will be presented later in Section~\ref{section:experiment_el}. \\
\\
Based on the four and two dimensional transition models, one can train the interpolation grid by simulating the expected PD for each point on the grid. As we described in Algorithm~\ref{algorithm:projection2}, the simulation of the expected PD uses the higher dimensional transition model while the grid points are in a lower dimensional space,  obtained by a dimension reduction approach, i.e.\ the proposed Bayesian projection approach or the PCA approach. Then the valuation grid can be trained using the simulated expected PD and the projected factors. We compare the accuracy of the valuation grid based on the proposed Bayesian filter projection approach and the PCA approach. 
To build the valuation grid, the expected PD are simulated from the four factor model specified in Section~\ref{section:tm}. 

The starting points of the simulations (i.e.\ the higher dimensional factors) consist of two parts: one contains the pre-determined points to cover the possible extreme simulated values of the factors, the other contains random  simulated points to fill the density. In this experiments, the pre-determined points are chosen linearly spaced from minus four sigma to plus four sigma, i.e.\ for the $i$-th component of the factor, the linear spaced points are chosen from interval $[-4\sigma_i, 4\sigma_i]$, with $\sigma_i$ the standard deviation of the $i$-th component of the factor. For each dimension, we chose $15$ points. Moreover, $1000$ more points are randomly simulated from the joint distribution of the four dimensional factors. Therefore, in total there are $15^4 + 1000 = 51625$ points in the valuation grid. Starting from the factors, each term structure of the expected PD is estimated by Monte Carlo simulation with $1000$ paths. The term structure is considered from $1$ to $30$ periods. Meanwhile, the four dimensional factors are projected to lower dimensional factors using the Bayesian filter projection approach and the PCA approach. The chosen lower dimensional models are the one-factor and two-factor model. Once the valuation grids are trained, $1000$ new four dimensional factors are simulated as the test set to assess the performance of different valuation grids based on different approaches. Based on these $1000$ new four dimensional factors, the term structures of the expected PD are estimated by using 20000 Monte Carlo paths. These Monte Carlo estimations are used as the benchmark for the comparison.  

Figures~\ref{fig:EPD_prj_pca_1f} and~\ref{fig:EPD_prj_pca_2f} show the average (over the 20000 MC paths) relative difference on the term structure between the valuation grids and the Monte Carlo benchmarks. Specially, Figure~\ref{fig:EPD_prj_pca_1f} presents the comparison between the Bayesian filter projection using one factor model and the PCA projection using the first principal component, while Figure~\ref{fig:EPD_prj_pca_2f} presents the Bayesian filter projection of the two factor model and the PCA projection using the first two principal components. One observes that for one factor projection, in general PCA approach performs slightly better than the Bayesian filter projection approach. In particular, the Bayesian filter projection performs better for the first rating, but worse for rating 2 and rating 3. This is because the one factor model mainly focuses on the transitions of the first rating and hence the errors for the other two ratings are large. For the two factor projection, on the other hand, the Bayesian filter projection performs much better than the PCA approach. Comparing Figures~\ref{fig:EPD_prj_pca_1f} and \ref{fig:EPD_prj_pca_2f}, one observes very big improvements on the relative error for the Bayesian filter projection, while the improvement for the PCA projection is limited.

\begin{figure}[ht]
 \centering
 \includegraphics[width=0.95\textwidth]{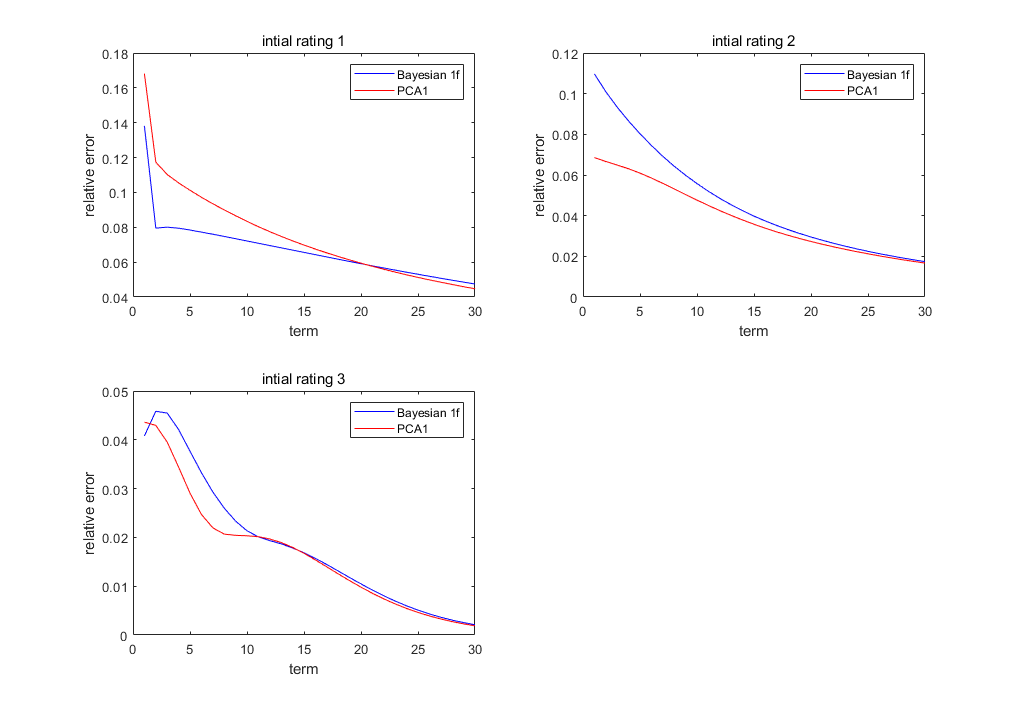}
 \caption{Average relative error on expected PD: Bayesian filter projection vs.\ PCA projection, one factor.}
 \label{fig:EPD_prj_pca_1f}      
\end{figure}

\begin{figure}[ht]
 \centering
 \includegraphics[width=0.95\textwidth]{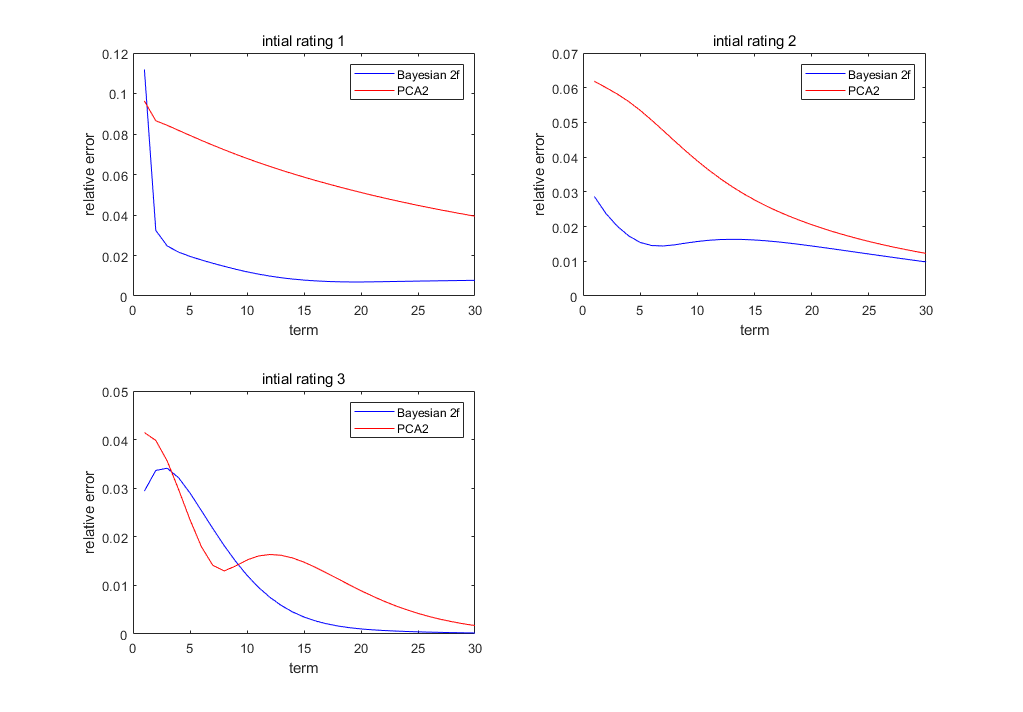}
 \caption{Average relative error on expected PD: Bayesian filter projection vs.\ PCA projection, two factors.}
 \label{fig:EPD_prj_pca_2f}      
\end{figure}

\subsubsection{Expected LGD: Black-Scholes formula v.s. Monte Carlo}\label{section:experiment_elgd} 

Before proceeding to the experiment on the loss valuation, in the experiment of this section, we illustrate the Black Scholes formula, the analytical solution of Equation~\eqref{eq:BSF_ELGD} in Corollary~\ref{cor:BSF_ELGD}, for the expected LGD (ELGD), by comparing it to the Monte Carlo approximation. Note that the ELGD depends on the initial LTV and the initial log-return of the collateral. In this case, we choose the initial LTV to be 0.8, 1, 1.5 and 2, the initial log-return of the collateral to be -0.1, 0, and 0.1. Figure~\ref{fig:ELGD_mc1M_BS} shows the term structure of the ELGD computed by the Black-Scholes formula and the Monte Carlo simulation with 1 million scenarios. One can easily see that the outcomes of the two approaches align very well with each other. 
\begin{figure}[ht]
 \centering
 \includegraphics[width=0.95\textwidth]{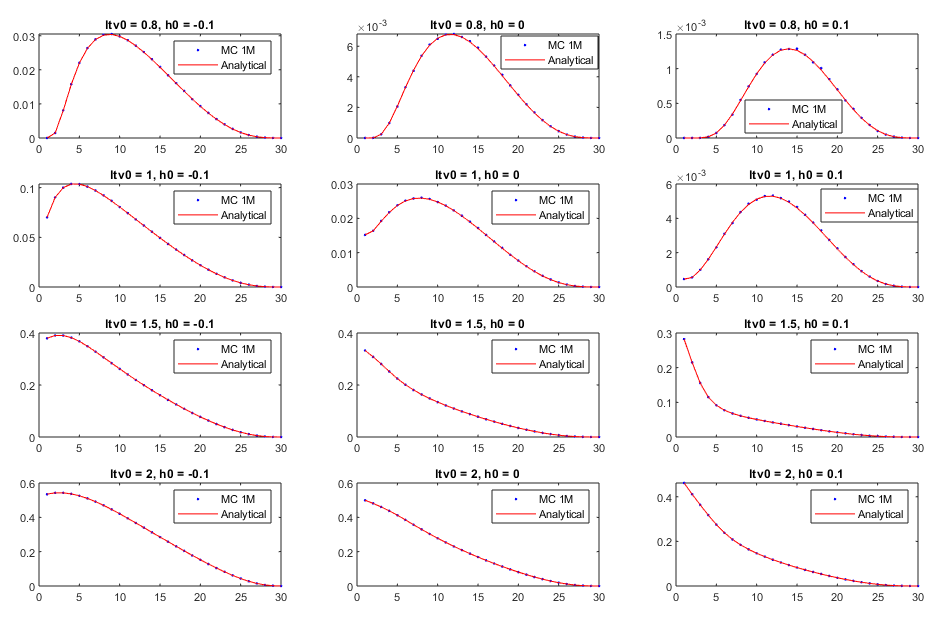}\caption{Term structure of ELGD: Black-Scholes (analytical) vs.\ Monte Carlo.}
 \label{fig:ELGD_mc1M_BS}      
\end{figure}

\subsubsection{Experiments for the projection approach for risk measures of the loss distribution}\label{section:experiment_el}

These experiments assess the accuracy of the proposed projection approach when estimating the quantiles of the loss distribution. In order to obtain a deeper insight about the performance, we investigate the loss distribution per initial rating, instead of the total loss distribution, which aggregates over different ratings. In this way we can compare the performance of the projection approaches for different initial ratings. To be more specific, the loss distribution per initial rating is constructed as follows.
\begin{enumerate}
\item Simulate one million scenarios (risk factors and collateral process) using the four factor model,
\item project the simulated risk factors into lower dimensional risk factors,
\item for each initial rating, given the lower dimensional factors, estimate the expected PD per scenario using the trained grid as in Section~\ref{section:experiment_EPD},
\item given the simulated collateral value, compute the expected LGD using the Black Scholes formula \eqref{eq:BSF_ELGD}.
\item Note that the PD and LGD are assumed to be independent, therefore the loss per scenario can be evaluated, according to equation \eqref{eq:Vr}, by using the estimated expected PD and the expected LGD.
\end{enumerate}
To illustrate the performance, the loss distribution based on the four factor model is used as a benchmark. In this benchmark, instead of using the projection approach and the interpolation grids, the expected PD per scenario in step 3 mentioned above is estimated directly by Monte Carlo simulation based on the four dimensional model itself.  The chosen risk measures are expected loss, 95\%-, 99\%- and 99.9\%-quantiles. We compute the absolute relative difference of these risk measures between the estimates from  projection approaches and from the benchmark. Figures \ref{fig:EL_prj_pca}--\ref{fig:var999_prj_pca_1f} present the comparison of the relative differences between different projection approaches for different risk measures: expected loss (EL), 95\%-, 99\%- and 99.9\%- quantiles, respectively. In each figure, the dotted lines and the solid lines show the relative difference for the Bayesian projection approach and for PCA approaches, respectively, while the blue and red lines present the relative difference of one factor and two factor models (PCA), respectively. 
\medskip\\
In general, the results are in line with the results in Section~\ref{section:experiment_EPD}. The Bayesian projection approach with the two factor model performs significantly better than the other candidates. Specially it has significantly smaller error than the PCA approach with the first two principal components. By contrast, the Bayesian projection approach with the one factor model does not outperform the PCA approach with the first principal component. The reason is that the Bayesian projection approach with the one factor model focuses on the best rating, i.e.\ rating~1, which has the majority number of clients, while the PCA approach does not have this concentration. Therefore, one observes a better performance of one-factor Bayesian projection in the best rating but worse performance in rating 2 and 3. Another important observation is that, the Bayesian projection approach has a much better performance in the best rating than the PCA approach when estimating the more extreme quantile of the loss distribution. in particular one observes in Figure~\ref{fig:var999_prj_pca_1f} that even the one factor Bayesian projection has comparable performance as the two-factor PCA projection in the best rating when estimating the 99.9\%-quantile. This is a desired property since in practice most of the exposures are located in the low-default-probability portfolios.

\begin{figure}[ht]
 \centering
  \includegraphics[width=0.95\textwidth]{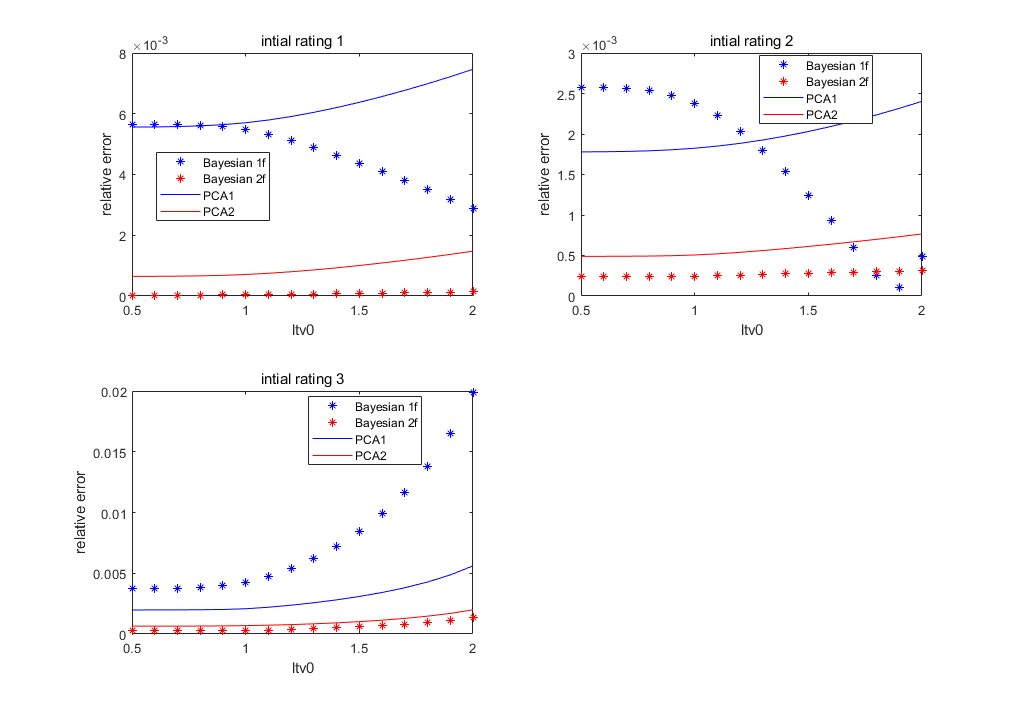}
\caption{Absolute relative difference on expected loss: Bayesian filter projection vs.\ PCA projection.}
 \label{fig:EL_prj_pca}      
\end{figure}

\begin{figure}[ht]
 \centering
 \includegraphics[width=0.95\textwidth]{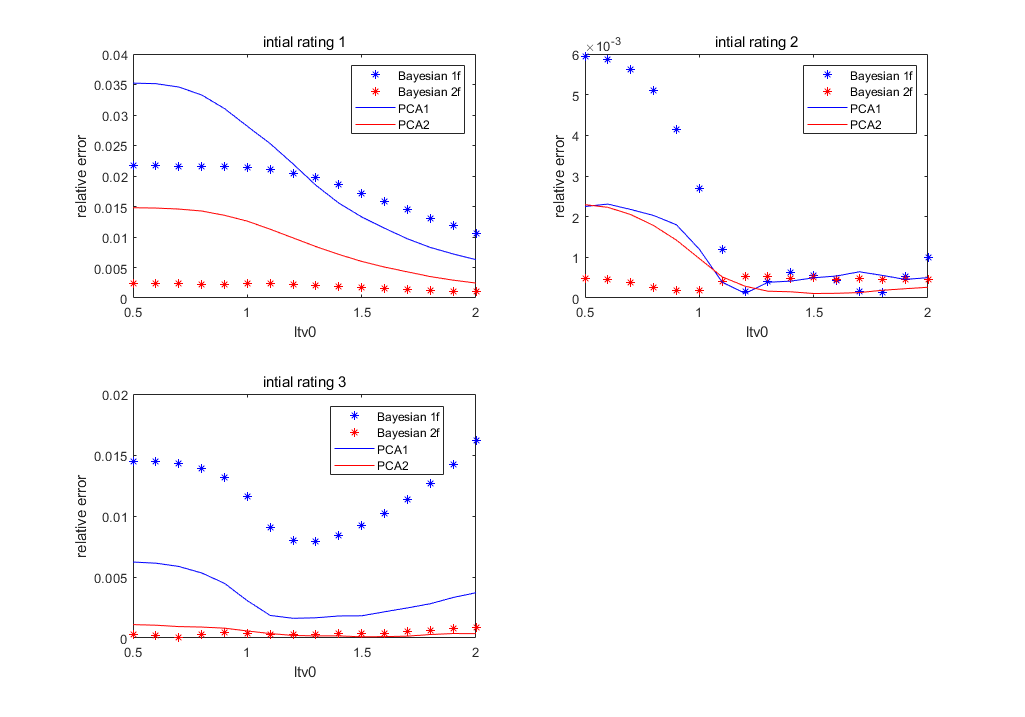}
 \caption{Absolute relative difference on 95\%-quantile: Bayesian filter projection vs.\ PCA projection.}
 \label{fig:var95_prj_pca}      
\end{figure}

\begin{figure}[ht]
 \centering
 \includegraphics[width=0.95\textwidth]{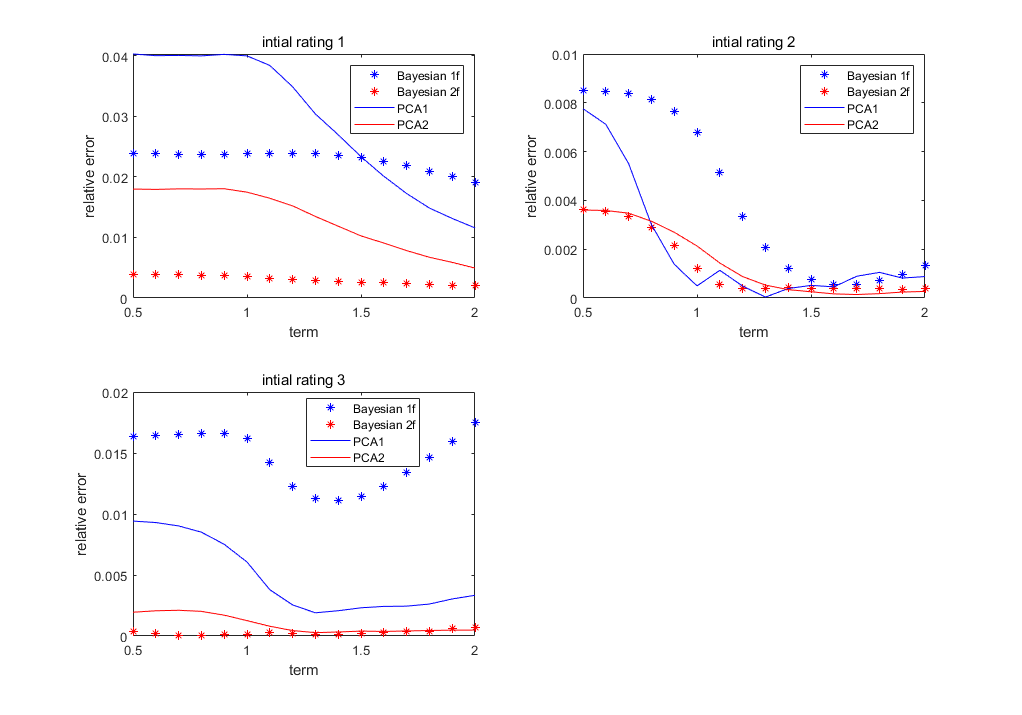}
 \caption{Absolute relative difference on 99\%-quantile: Bayesian filter projection vs.\ PCA projection.}
 \label{fig:var99_prj_pca_1f}      
\end{figure}

\begin{figure}[ht]
 \centering
 \includegraphics[width=0.95\textwidth]{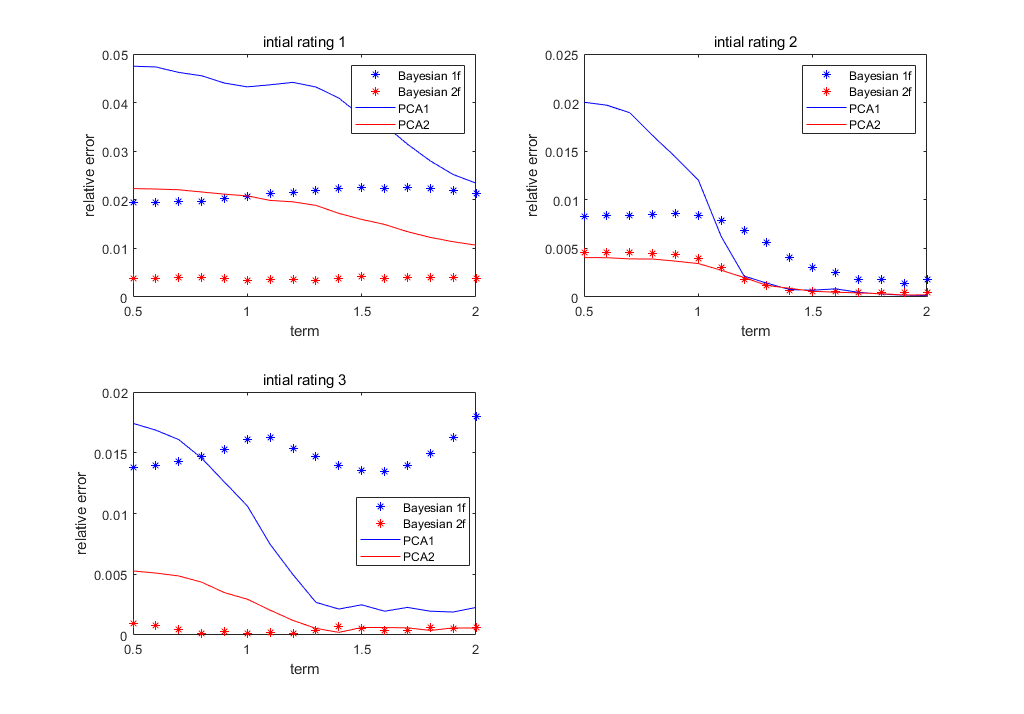}
 \caption{Absolute relative difference on 99.9\%-quantile: Bayesian filter projection vs.\ PCA projection}
 \label{fig:var999_prj_pca_1f}      
\end{figure}

\section{Conclusions}\label{section:conclusions}

In this paper we developed a dimension reduction methodology based on the Bayesian filter and smoother. This methodology is designed to achieve a fast and accurate loss valuation algorithm in the credit risk modelling, but it can also be extended to the valuation models of other risk types. The proposed methodology is generic, robust and can easily be implemented. Moreover, we showed the accuracy of the proposed methodology in the estimation of expected loss and value-at-risk by numerical experiments. The results suggest that, compared to the currently most used PCA approach, the proposed methodology provides more accurate estimation on the expected loss and value-at-risk of a loss distribution.

\clearpage
\appendix

\section{Analytic formula for expected loss-given-default}\label{appendix:BSF_LGD}

The LGD at time $k$ conditional on the initial information, the sigma-algebra $\mathcal{F}_0$, is defined by
\begin{equation*}
\elgd_k = \mathbb{E}\left[\lgd_k | \mathcal{F}_0 \right].
\end{equation*}
In the present paper, we assume the LGD depends on the collateral value, see \eqref{eq:lgdmodel}. Hence we have
\begin{equation*}
\elgd_k = \mathbb{E}\left[(1-\frac{c_k}{\ead_k})^+ | \mathcal{F}_0 \right],
\end{equation*}
with $c_k$ the value of collateral and $\ead_k$ the exposure-at-default at time $k$. Since we assume the $\ead_k$ is deterministic, the $\elgd_k$ is determined by the stochastic nature of the collateral value $c_k$.  The loan-to-value ratio is $\ltv_0 = \frac{\ead_0}{c_0}$, hence we have
\begin{align}
\elgd_k &= \mathbb{E}\left[(1-\frac{c_k}{c_0}\frac{c_0}{\ead_0}\frac{\ead_0}{\ead_k})^+ | \mathcal{F}_0 \right]\nonumber\\
&= \mathbb{E}\left[(1-\frac{c_k}{c_0}\frac{1}{\ltv_0}\frac{\ead_0}{\ead_k})^+ | \mathcal{F}_0 \right]\nonumber\\
&= \mathbb{E}\left[(1-K_k\frac{c_k}{c_0})^+ | \mathcal{F}_0 \right],\label{eq:elgdz}
\end{align}
with $K_k = \frac{1}{\ltv_0}\frac{\ead_0}{\ead_k}$. We suppose that the log-returns $\lc_k = \log\frac{c_k}{c_{k-1}}$ of the collateral follow an AR(1) process,
\begin{equation}\label{eq:ARR}
\lc_k = c + x \lc_{k-1} + \sigma z_k, \, z_k\sim N(0,1),\, k\geq 1,
\end{equation}
with $|x|<1$. Define the numbers $a_k$ by the iteration
\begin{equation}\label{eq:at}
a_k = 1 + x a_{k-1},\, a_0=0,
\end{equation}
which results in the analytic expression $a_k = \frac{1-x^k}{1-x}$. We now have the following result.
\begin{prop}\label{prop:collateral}
For given $\lc_0$, it holds that
\begin{equation}
\label{eq:collateral}
\log\frac{c_t}{c_0} \sim N(\mu_t, \Omega_t),
\end{equation}
with
\begin{align*}
\mu_t &= c\sum_{i=1}^t a_i + a_tx\, \lc_0,\\
\Omega_t &= \sigma^2\sum_{i=1}^t a_i^2.
\end{align*}
\end{prop}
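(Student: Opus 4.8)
The plan is to exploit the fact that the log-returns telescope, so that $\log\frac{c_t}{c_0}=\sum_{k=1}^t \lc_k$, and then to solve the AR(1) recursion \eqref{eq:ARR} in closed form. First I would prove, by induction on $k$, the explicit representation
\begin{equation*}
\lc_k = c\, a_k + x^k\, \lc_0 + \sigma \sum_{j=1}^{k} x^{k-j} z_j ,
\end{equation*}
where $a_k = 1 + x + \cdots + x^{k-1} = \tfrac{1-x^k}{1-x}$ is precisely the sequence defined by the iteration \eqref{eq:at}. The base case $k=1$ is immediate, and the induction step follows at once by substituting the formula for $\lc_{k-1}$ into $\lc_k = c + x\lc_{k-1} + \sigma z_k$ and using $a_k = 1 + x a_{k-1}$ together with the index shift in the noise sum.

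Next I would sum this identity over $k=1,\dots,t$. The drift term contributes $c\sum_{k=1}^t a_k$; the term carrying $\lc_0$ contributes $\lc_0\sum_{k=1}^t x^k = a_t x\,\lc_0$, recognising the geometric sum as $x a_t$; and the innovations contribute $\sigma\sum_{k=1}^t\sum_{j=1}^k x^{k-j}z_j$. Interchanging the order of summation in this double sum yields $\sigma\sum_{j=1}^t z_j\sum_{k=j}^t x^{k-j} = \sigma\sum_{j=1}^t a_{t-j+1}z_j$, since the inner partial geometric series of length $t-j+1$ equals $a_{t-j+1}$. Altogether,
\begin{equation*}
\log\frac{c_t}{c_0} = \underbrace{c\sum_{k=1}^t a_k + a_t x\, \lc_0}_{=\,\mu_t} \; + \; \sigma\sum_{j=1}^t a_{t-j+1}\, z_j .
\end{equation*}

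Finally, conditionally on the given value $\lc_0$, the right-hand side is an affine function of the independent standard normals $z_1,\dots,z_t$, hence Gaussian; its mean is $\mu_t$, and by independence of the $z_j$ its variance is $\sigma^2\sum_{j=1}^t a_{t-j+1}^2 = \sigma^2\sum_{i=1}^t a_i^2 = \Omega_t$ after the reindexing $i=t-j+1$. This is exactly the assertion \eqref{eq:collateral}. I do not expect a genuine obstacle here: the proof is essentially bookkeeping, and the only points requiring care are the repeated identification of partial geometric series with the sequence $a_k$, the change of order of summation, and the (standard) fact that under the conditioning $\lc_0$ is treated as a constant while the innovations $z_k$ remain independent of it.
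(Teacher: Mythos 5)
Your argument is correct, and it reaches the same terminal representation as the paper, namely $\log\frac{c_t}{c_0}=c\sum_{i=1}^t a_i+xa_t\,\lc_0+\sigma\sum_{i=1}^t a_i z_{t-i+1}$, from which the Gaussian law with mean $\mu_t$ and variance $\Omega_t$ is read off; but the route to that representation is genuinely different. The paper's key lemma is the single identity \eqref{eq:loop}, proved by induction on $s$, which unrolls the recursion \emph{backwards} from time $k$ and carries along a mixed expression containing both the not-yet-expanded term $a_{s+1}\lc_{k-s}$ and a residual partial sum of returns; setting $s=k$ then collapses everything at once. You instead solve the AR(1) recursion \emph{forwards} to get the standard closed form $\lc_k=c\,a_k+x^k\lc_0+\sigma\sum_{j=1}^k x^{k-j}z_j$, telescope $\log\frac{c_t}{c_0}=\sum_{k=1}^t\lc_k$, and interchange the order of summation in the resulting double sum over innovations. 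Your version is the more textbook-style argument: the induction hypothesis is the obvious one rather than the somewhat ad hoc \eqref{eq:loop}, and the coefficients $a_{t-j+1}$ emerge transparently as partial geometric sums. The price is the extra Fubini step (finite, so harmless) and the need to recognize $\sum_{k=1}^t x^k=xa_t$ and $\sum_{k=j}^t x^{k-j}=a_{t-j+1}$, both of which you handle. The paper's approach avoids the double sum entirely but requires guessing the heavier induction statement. Both proofs are complete and of comparable length; there is no gap in yours.
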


\begin{proof}
It follows from the recursion \eqref{eq:ARR} that all $\lc_k$ are normal (conditional on $\lc_0$). Hence for \eqref{eq:collateral} to hold we only have to find mean and variance. We first prove a general result. Namely,  for $s=0,\ldots, k$, the equation
\begin{equation}
\label{eq:loop}
\log\frac{c_k}{c_0} + \lc_0 = c\sum_{i=1}^s a_i + a_{s+1}\lc_{k-s} + \sum_{i=0}^{k-s-1} \lc_i + \sigma\sum_{i=1}^{s}a_i z_{k-i+1}
\end{equation}
holds. We now show \eqref{eq:loop} by induction w.r.t.\ the variable $s=0,\ldots,k$.
When $s = 0$,
\begin{equation*}
\log\frac{c_k}{c_0} = \sum_{i=1}^k \lc_i,
\end{equation*}
which is Equation~\eqref{eq:loop} for $s=0$ by definition of the $\lc_i$. Suppose then that Equation~\eqref{eq:loop} holds for $s=t$. Using \eqref{eq:ARR} with $k$ replaced by $k-t$ and \eqref{eq:at} with $k$ replaced by $t+2$, one gets starting from the induction assumption
\begin{equation*}
\begin{aligned}
\log\frac{c_k}{c_0} + \lc_0 &= c\sum_{i=1}^t a_i + a_{t+1}(c + x \lc_{k-t-1} + \sigma z_{k-t}) + \sum_{i=0}^{k-t-1} \lc_i +  \sigma\sum_{i=1}^{t}a_i z_{k-i+1}\\
& = c\sum_{i=1}^{t+1} a_i + a_{t+1}x \lc_{k-t-1} + \lc_{k-t-1} + \sum_{i=0}^{k-t-2} \lc_i + \sigma\sum_{i=1}^{t+1}a_i z_{t-i+1}\\
& = c\sum_{i=1}^{t+1} a_i + a_{t+2}\lc_{k-t-1} + \sum_{i=0}^{k-t-2} \lc_i + \sigma\sum_{i=1}^{t+1}a_i z_{t-i+1},
\end{aligned}
\end{equation*}
which is indeed Equation~\eqref{eq:loop} for $s=t+1$, as was to be shown. 

Taking $k = t$ and $s=k$ in Equation~\eqref{eq:loop}, we have, using \eqref{eq:at} once more,
\begin{align*}
\log\frac{c_t}{c_0} & = -\lc_0 + c\sum_{i=1}^{t} a_i + a_{t+1}\lc_{0} + \sigma\sum_{i=1}^{t}a_i z_{t-i+1} \\
& = c\sum_{i=1}^{t} a_i + xa_t\lc_{0} + \sigma\sum_{i=1}^{t}a_i z_{t-i+1}.
\end{align*}
Therefore we obtain the distribution of $\log\frac{c_t}{c_0}$, conditional on $ \lc_{0}$,
\begin{equation*}
\log\frac{c_t}{c_0}\mid \lc_{0}\sim N(\mu_t, \Omega_t)
\end{equation*}
with $\mu_t = c\sum_{i=1}^{t} a_i + x a_t \lc_0$ and $\Omega_t = \sigma^2\sum_{i=1}^{t}a_i^2$, which finishes the proof.
\end{proof}
\noindent
As a corollary we obtain a Black-Scholes like formula for the expected loss given default.
\begin{cor}\label{cor:BSF_ELGD}
For the expected loss given default one has
\begin{equation}
\elgd_t = \Phi(-d_2) - K_te^{\mu_t+\frac{\Omega_t}{2}}\Phi(-d_1),
\label{eq:BSF_ELGD}
\end{equation}
with $\Phi$ the standard normal cumulative probability function, and
\begin{equation*}
\begin{aligned}
d_1 &= \frac{\log(K_t) + \mu_t + \Omega_t}{\sqrt{\Omega_t}},\\
d_2 &= \frac{\log(K_t) + \mu_t}{\sqrt{\Omega_t}}.
\end{aligned}
\end{equation*} 
\end{cor}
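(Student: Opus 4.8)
The plan is to reduce the claim to a single scalar Gaussian integral and then evaluate it by the ``complete the square'' manoeuvre that underlies the Black--Scholes formula. Starting from Equation~\eqref{eq:elgdz}, set $Y:=\log\frac{c_t}{c_0}$. Proposition~\ref{prop:collateral} states that, conditional on $\mathcal{F}_0$ (which carries $\lc_0$), $Y\sim N(\mu_t,\Omega_t)$, so the task is to compute $\mathbb{E}\big[(1-K_t\e^{Y})^+\big]$ for a single normal random variable $Y$ with the parameters given in the Proposition.

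First I would split the positive part over the event $\{1-K_t\e^{Y}>0\}=\{Y<-\log K_t\}$, obtaining
\begin{equation*}
\elgd_t=\mathbb{P}\big(Y<-\log K_t\big)-K_t\,\mathbb{E}\big[\e^{Y}\mathbbm{1}_{\{Y<-\log K_t\}}\big].
\end{equation*}
The first term is immediate: standardising $Y$ gives $\mathbb{P}(Y<-\log K_t)=\Phi\big(\tfrac{-\log K_t-\mu_t}{\sqrt{\Omega_t}}\big)$, which equals $\Phi(-d_2)$ with $d_2$ as in the statement. (Here one uses $\Omega_t=\sigma^2\sum_{i=1}^t a_i^2>0$, valid whenever $\sigma>0$ and $t\ge1$, so that the division makes sense.)

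For the second term I would invoke the elementary identity $\mathbb{E}\big[\e^{Y}\mathbbm{1}_{\{Y<a\}}\big]=\e^{\mu_t+\Omega_t/2}\,\Phi\big(\tfrac{a-\mu_t-\Omega_t}{\sqrt{\Omega_t}}\big)$ for $Y\sim N(\mu_t,\Omega_t)$. This is proved by writing the expectation as a Gaussian integral over $(-\infty,a)$ and completing the square in the exponent $y-\tfrac{(y-\mu_t)^2}{2\Omega_t}$; the square completes to $-\tfrac{1}{2\Omega_t}(y-\mu_t-\Omega_t)^2$ plus the constant $\mu_t+\Omega_t/2$, and the remaining integral is again a normal tail probability. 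Taking $a=-\log K_t$ yields $\mathbb{E}\big[\e^{Y}\mathbbm{1}_{\{Y<-\log K_t\}}\big]=\e^{\mu_t+\Omega_t/2}\Phi(-d_1)$ with $d_1$ as stated. Substituting both pieces into the display gives precisely Equation~\eqref{eq:BSF_ELGD}.

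There is no real analytic obstacle: the computation is the classical Black--Scholes put-price integral, with $Y$ in the role of the log-price at maturity and $K_t$ the (deterministic) scaling factor. The only points deserving care are bookkeeping ones --- matching the signs and the $\Omega_t$-shift to the stated definitions of $d_1$ and $d_2$, and noting that the conditioning on $\mathcal{F}_0$ enters only through $\lc_0$, which is already absorbed into $\mu_t$ via Proposition~\ref{prop:collateral}, so that $\elgd_t$ is genuinely an explicit function of the observed initial log-return and the parameters $c$, $x$, $\sigma$.
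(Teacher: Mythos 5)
Your proof is correct and follows exactly the route the paper intends: the paper's own proof simply remarks that the computation is ``completely analogous to the proof of the classical Black--Scholes formula'' with $\log Z\sim N(\mu_t,\Omega_t)$, and your splitting over $\{Y<-\log K_t\}$ plus the complete-the-square identity for $\mathbb{E}\big[\e^{Y}\mathbbm{1}_{\{Y<a\}}\big]$ is precisely that computation carried out in full, with the signs and the $\Omega_t$-shift matching the stated $d_1$ and $d_2$.
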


\begin{proof}
The proof is completely analogous to the proof of the classical Black-Scholes formula by realizing that Equation~\eqref{eq:elgdz} gives $\elgd_t=\mathbb{E}\left[(1-K_t\log Z)^+ | \mathcal{F}_0 \right]$, where $Z$ has the conditional normal distribution with mean $\mu_t$ and variance $\Omega_t$ as in  Proposition~\ref{prop:collateral}.
\end{proof}

\bibliographystyle{plainnat} 
\bibliography{A_dimension_reduction_approach_for_loss_valuation_in_credit_risk_modelling}

\end{document}